\begin{document}


\title{On the existence of highly organized
communities in networks of locally interacting agents}


\author{V. Liagkou \inst{1,4}
\and P.E. Nastou \inst{5,6} \and P. Spirakis \inst{2} \and Y.C. Stamatiou\inst{1,3}}

\institute{Computer Technology Institute and Press - ``Diophantus'', University of Patras Campus, 26504, Greece \and
Department of Computer Science, University of Liverpool,
UK and Computer Engineering and Informatics Department, University of Patras, 26504, Greece \and
Department of Business Administration, University of Patras, 26504, Greece \and
University of Ioannina, Department of Informatics and Telecommunications, 47100 Kostakioi Arta, Greece  \and
Department of Mathematics, University of the Aegean, Applied Mathematics and Mathematical Modeling Laboratory, Samos, Greece
\and
Center for Applied Optimization, University of Florida, Gainesville, USA
\\ e-mails: liagkou@cti.gr, pnastou@aegean.gr, P.Spirakis@liverpool.ac.uk, stamatiu@ceid.upatras.gr}

\maketitle

\begin{center}
\today
\end{center}

\begin{abstract}
In this paper we investigate
phenomena of spontaneous emergence or purposeful formation
of highly organized structures in networks of related agents.
We show that the formation of large organized structures
requires exponentially large, in the size of the structures, networks.
Our approach is based on Kolmogorov, or descriptional, complexity
of networks viewed as finite size strings.
%
We apply this approach to the study of the emergence or formation of
simple organized, hierarchical, structures based on Sierpinski Graphs and
we prove a Ramsey type theorem that bounds the number of vertices in Kolmogorov random graphs that contain Sierpinski Graphs as subgraps.
Moreover, we show that Sierpinski Graphs encompass close-knit relationships among their vertices
that facilitate fast spread and learning of information when
agents in their vertices are engaged in pairwise interactions modelled as two person games. Finally, we generalize our findings for any organized structure with succinct representations.
Our work can be deployed, in particular, to study problems related to the security of networks by identifying conditions which enable or forbid the formation
of sufficiently large insider-subnetworks with malicious common goal to overtake the network or cause disruption of its operation.
\end{abstract}

\noindent
{\bf Keywords:} Organized Threat Structures, Network Threats,
Sierpinksi Triangle, Sierpinski Graphs, Kolmogorov Complexity, Ramsey Theory.

\section{Introduction}

In this paper, we address the problem of the formation or
emergence of certain configurations or patterns,
termed {\em highly organized structures} in our context,
in evolving networks of agents (e.g. social networks)
based on combining the concepts of
{\em Ramsey Numbers in graphs} and {\em Kolmogorov Complexity}.
Our work does not rely on random graph or asymptotics techniques and focuses on finite graphs
and can be used to study when and how organized structures appear in finite size networks of interacting agents, answering
questions such as the following: (i) given a network of interacting agents of a specific size,
how large can an organized subnetwork be in which a common goal diffuses
to overtake the whole network?
(ii) can a subgroup of a specific organization structure appear in networks of a given size? and
(iii) how large a network of interacting agent should be before it becomes vulnerable to the possibility of formation
of large subnetworks of closely-interacting malicious agents?

Broadly speaking, {\em Ramsey theory} refers to mathematical statements
that a specific structure (e.g. string, graph, number sequence etc.) is certain to contain a large,
highly organised, substructure. Examples of such statements stems from several disciplines,
including mathematical logic, number theory, analysis, and graph theory. One of the most
well-known such statements in number theory, proved by van der Waerden in 1927
(see~\cite{vdW27}) is the following: for {\em any} fixed positive integers $r$ and $k$,
there exists some positive integer $n$ such that if the integers $\{1, 2, \ldots, n\}$
are colored using $r$ different colors, then there exist {\em at least} $k$ monochromatic
integers in arithmetic progression. The {\em least} such value of
$n$ is called the {\em Van der Waerden number} $W(r, k)$.
In this paper, our focus is on the Ramsey Theory of {\em graphs}. This research
area was inaugurated in 1930 by Ramsey in~\cite{R30} in which he
stated and proved the, so called, {\em Ramsey's Theorem}:
for {\em any} graph $H$, there exists a natural number $n$ such
that for {\em any} colouring of the edges of $K_n$ (i.e. the clique on $n$ vertices)
with two colours, $K_n$ contains a {\em monochromatic} copy
of $H$ as a subgraph, not necessarily induced.
The {\em least} such value $n$ is called the {\it Ramsey number} of $H$
and is denoted by $r(H)$ (if $H = K_t$, we simply write $r(t)$).
For our purposes, we consider the Ramsey numbers for the class of
{\em Sierpinski Graphs}. This
class of graphs has {\em bounded} maximum degree for its vertices. This fact leads to
{\em linear}, in the number of vertices of $H$, upper bounds for the Ramsey numbers of these graphs. This
follows from a more general result of Chv\'atal, R\"odl, Szemer\'edi and Trotter in \cite{CRST83} which
states that if $H$ is a graph on $n$ vertices and maximum degree
$\Delta$, then the Ramsey number $r(H)$ is bounded by $c(\Delta) n$
for some constant $c(\Delta)$ depending only on $\Delta$.
That is, the Ramsey number of bounded-degree graphs grows linearly in the number of vertices.
Moreover, a linear bound also holds for the {\em induced}
Ramsey numbers which are defined as the Ramsey numbers only, now,
the monochromatic copy of $H$ needs to form an {\em induced} subgraph,
which is a stricter form of subgraph that allows
the existence of edges in the subgraph $H$ if and only if they are edges of $H$.
Other edges, which may exist among vertices of $H$ as edges of the larger graph
should not exist, as it may be the case in the general subgraph notion.
The {\em Induced Ramsey Theorem} states that
$r_{\textrm{ind}}(H)$ exists for every graph $H$ (see, e.g., Chapter 9.3 in \cite{Di}).

On the other hand, the {\em Kolmogorov complexity} of a {\em finite object}
(most often a bit sequence modelling a finite structure such as a graph)
is defined as the {\em minimum} number of
bits into which the object can be compressed without losing information, i.e. so as the
compressed string is recoverable through some algorithm, running on
a reference machine, which is most often the universal Turing machine (see~\cite{LiVi19}).
Note the the word {\em complex} in the context of
Kolmogorov complexity reflects, rather, {\em lack} of organization,
pattern, and ``sophistication'' as opposed to the meaning
the word complex has in everyday language as describing
an object or entity of a high degree of organization and fine
structural details. Thus, an object of {\em high} Kolmogorov complexity
is, actually, of low complexity in the usual sense since they have all
{\em randomness} properties while objects
of {\em low} Kolmogorov complexity lack randomness, having
internal structural organization and many regularities.
Thus, they can be {\em succinctly} described or {\em compressed},
in the terminology of Kolmogorov complexity theory.
With regard to this concept of complexity, it was proposed and developed,
independently, by Andrei Kolmogorov, Ray Solomonoff, and Gregory Chaitin
(\cite{kolm,sol,chaitin}). This concept addresses the
complexity of {\em finite objects}, in contrast with classical
complexity theory, like NP-completeness, that addresses the complexity
of {\em infinite} sets of finite objects, i.e.
languages in the complexity-theoretical terminology.

Based on Ramsey theory and Kolmogorov Complexity arguments,
we investigate, in this paper,
phenomena of existence or formation of highly organized structures,
such as organizations, people's networks, and societal patterns, in evolving networks of agents.
As a specific case, we focus on the class of {\em Sierpinski Graphs} which have, also, the important property
of being {\em close-knit}, i.e. agents interconnected with connections based on these graphs
are very ``cohesive''. Close-knit graph families
play an important role in diffusion processes in graphs
(see~\cite{young}) in contexts such as the emergence of shared and agreed upon
innovations or ideas over large population of agents..


\section{Closeness properties of graphs}

%
%
Agent interactions and the study of their long term evolution as well as properties using such models, can reveal crucial information about the dynamics of the social fabric and the way it expands as well as how {\em organized} substructures emerge, {\em inescapably}, due to these dynamics.
For instance, in this paper we show that, under certain conditions, one may
locate in large structures containing interacting agents, large substructures with certain ``member closeness'' properties. We show, for instance, in this paper that large {\em close-knit} regular substructures emerge, as the network
expands, as a form of {local ``island''} of regularity and social organization, even if the network of agents evolves in {\em unpredictable} (i.e. random) ways on a global scale. Close-knit agent structures, as Young showed in~\cite{young}, play an important role in {\em diffusing} throughout
society ideas and beliefs that are, initially, held by only a relatively small group of interacting agents.

Young's mathematical result examines how an idea or opinion spreads over a population of interacting agents whose social structure has a certain ``coherence'' property and whose members interact through a two person ``innovation acceptance'' game.

In~\cite{young}, Young defined a parameter of networks of agents, modelled as {\em graphs}, that describes their ``coherence'' as well as their vulnerability to outsiders' views against the innovation to be diffused.
A group of agents $S$ (i.e. nodes in a given graph $G$ with no isolated vertices) is {\em close-knit} if the following condition is true
for every $S' \subseteq S, S'\neq \emptyset$, given appropriate values for $r$ (see discussion below):
\begin{equation}
\min_{S' \subseteq S} \frac{d(S',S)}{\sum_{i \in S'} d_i} \geq r
\label{youngineq}
\end{equation}
where $d(S',S)$ is the internal connections of $S'$ i.e. the number of links $\{i,j\}$ where $i\in S'$ and $j\in S$ while $d_i$ is the total number
of links that agent $i$ possesses (i.e. its degree in the graph). It is obvious from Inequality~(\ref{youngineq}) that for every subset of a {\em close-Knit}
group $S$, the ratio of internal degree to the total degree is at least $r$.
Intuitively, to have a large such ratio in a group of agents we need, relatively, many internal connections and few external connections.
%
%

Given a positive integer $k$ and a real number $ 0 \leq r \leq 1/2$,
we call a graph $G$$(r,k)$-close-knit if every
agent belongs to a group $S$
of cardinality at most $k$ which is $r$-close-knit
as defined by Inequality~(\ref{youngineq}).
Finally, a {\em class} of graphs is {\em close-knit}
if for every $0 \leq r \leq 1/2$ there exists an integer $k$, possibly
depending on $r$, such that every graph in the class is $(r,k)$-close-knit.

Moreover, the agents of the population are involved in playing a two person
game on a regular basis in which pairs of interacting agents compete for a payoff against the other (non zero-sum game).
According to the game, each time two agents interact they, independently, choose to either adopt or not adopt the innovation, based on the payoffs they can receive from their choices.
The game is assumed to have a {\em risk dominant} Nash equilibrium in which
{\em both} players adopt the innovation (see~\cite{young} for the details).

Based on these two elements,
Young proved that for classes of close-knit agent graphs
{\em all} community members will eventually adopt the
innovation, i.e. the risk dominant
Nash equilibrium of the game, in a number
of interactions (``time'') which is bounded and independent
from the size of the community.
In other words, if all the subsets of the community members have
strong pairwise links and, at the same time, are weakly connected to
outsiders (who may even be negative towards adopting the
innovation), then a group of initiators will manage, in the end, to
convince all population members to adopt the innovation.

\section{Sierpinski triangle based group formations}
\label{sierpinski-networks}

In this section we describe a class of graphs based on the {\em Sierpinski Triangle}
fractal and prove that it forms a {\em close-knit graph family}.
%

\begin{definition}[Sierpinski Triangle gasket of level $l$]
Given an integer $l$, $l \geq 1$, we define the Sierpinski Triangle gasket of level $l$
or, simply, Sierpinski Triangle of level $l$ as follows: for $l = 1$ the Sierpinski Triangle is
an equilateral triangle while for $l > 1$, the Sierpinski Triangle of level $l$ is
composed of three copies of a Sierpinski Triangle of level $l - 1$ connected at their corners.
\end{definition}
In Figure~\ref{pattern1} we see a few of the first Sierpinski Triangles, for $l = 1, 2, 3, 4, 5, 6$  and $7$.
\begin{figure}[htb]
\centerline{\epsfig{file=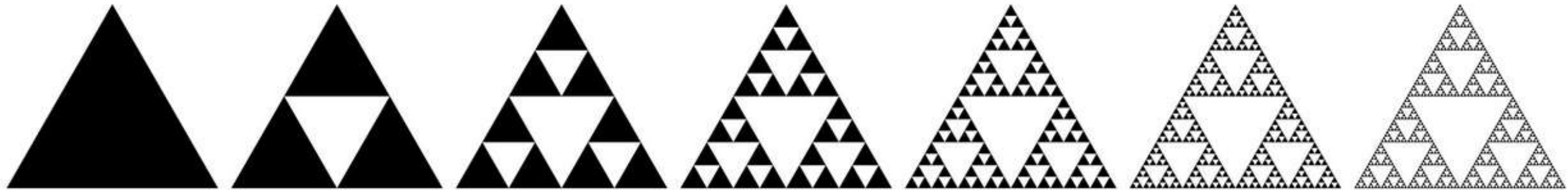,width=12cm}}
\caption{Sierpinski Triangles, for $l = 1, 2, 3, 4, 5, 6$  and $7$ respectively.}
\label{pattern1}
\end{figure}
Based on the Sierpinski Triangles, we can define a corresponding family of graphs, called {\em Sierpinski Graphs}.
\begin{definition}[Sierpinski graphs]\label{defKOPG}
The Sierpinski Graph of level $l$, $l \geq 1$, denoted by $S_l$ is formed as follows:
if $l = 1$ then the Sierpinski Graph of level 1 is formed if we replace the
three vertices and the edges of the Sierpinski Triangle of level 1 with graph vertices and edges
otherwise, for $l > 1$, the Sierpinski Graph of level 1 is formed by
three copies of the Sierpinkski Graph of level $l - 1$ by identifying their
vertices corresponding to the corners of the corresponding Sierpinski Triangle of level $l - 1 $.
\end{definition}

In Figure~\ref{pattern2}, we see the Sierpinski Graphs $S_1, S_2$ and $S_3$.
\begin{figure}[htb]
\centerline{\epsfig{file=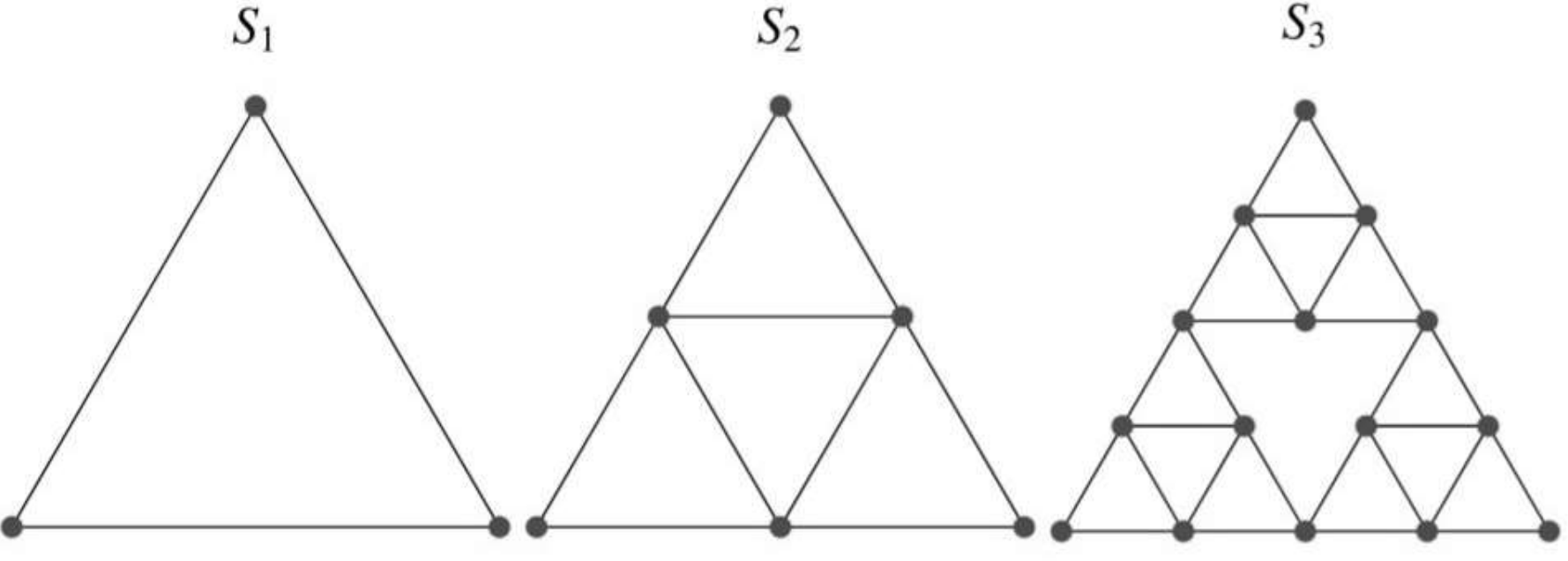,width=6cm}}
\caption{The Sierpinski Graphs $S_1, S_2$ and $S_3$.}
\label{pattern2}
\end{figure}
%
%
In particular, in Figure~\ref{pattern3} we see how the
Sierpinski graph of level $4$, $S_4$, is composed
of three copies of the Sierpinski Graph of
level 3, $S_3$, which is shown in a dashed enclosure.
\begin{figure}[htb]
\centerline{\epsfig{file=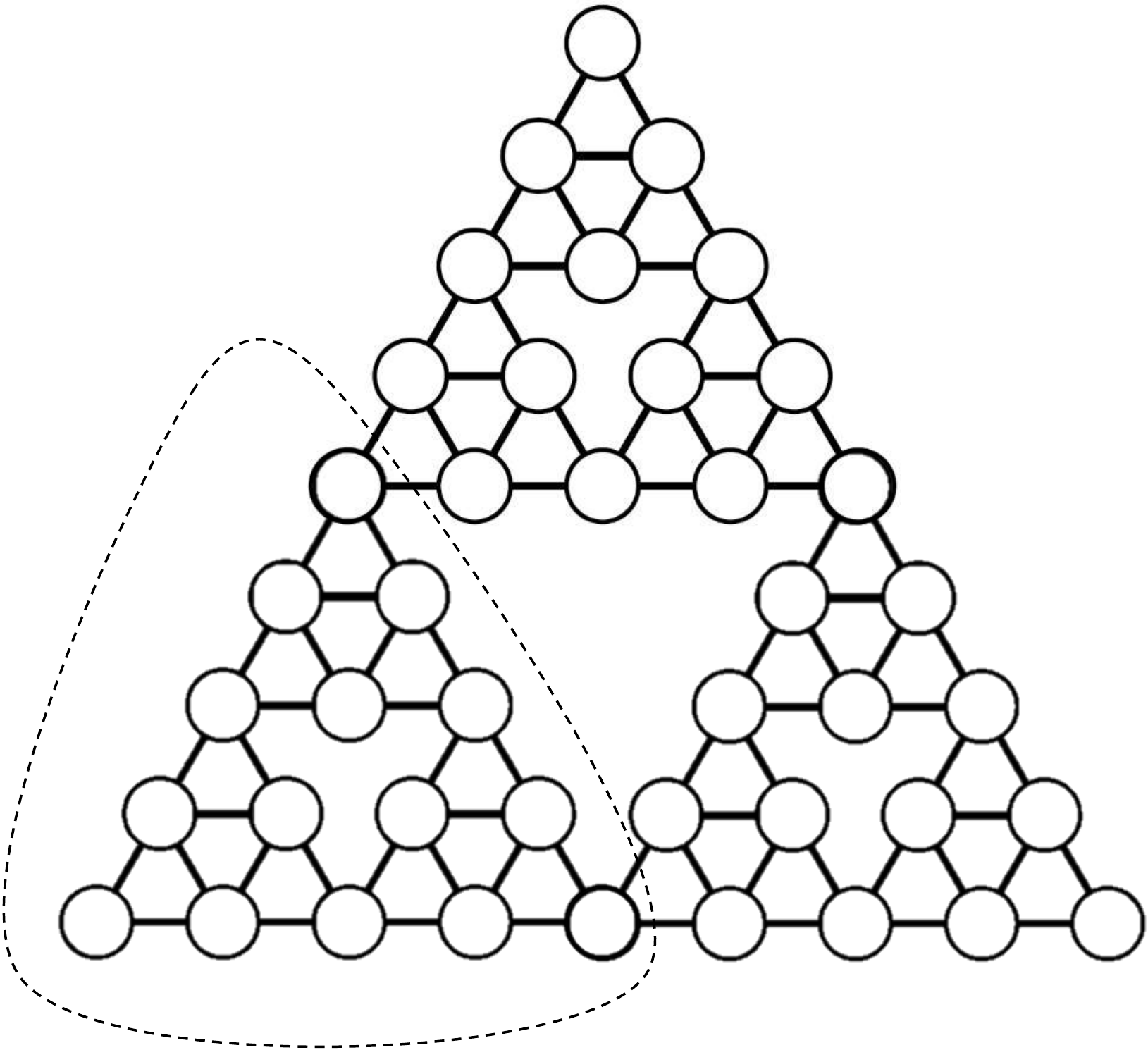,width=5cm}}
\caption{The Sierpinski Graph of level 4.}
\label{pattern3}
\end{figure}
%
It is not hard to see that the following properties hold for $S_l$:
\begin{itemize}
\item For every value of $l > 1$,
all vertices of $S_l$ have degree 4 except three vertices (the ``corner'' ones) which have degree 2.

\item The number of vertices of $S_l$ is $n_l = \frac{3}{2} (3^{l - 1} + 1)$.

\item The number of edges of $S_l$ is $m_l = 3^l$.

\end{itemize}
The degree and connectivity properties
helps satisfy Inequality~(\ref{youngineq}). Intuitively, having too many nodes with large degrees,
exposes a group of agents to much external interference. This affects negatively the close-knit property.
%
We prove the following:
\begin{theorem}
The family of Sierpinski Graphs is close-knit.
\end{theorem}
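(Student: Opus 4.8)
The plan is to show that for every $r$ with $0 \le r \le 1/2$ there is a level $k(r)$ such that every vertex of every Sierpinski graph $S_l$ lies in an $r$-close-knit set of bounded size; the natural candidate for that set is a sub-triangle of $S_l$ at some fixed level $t$, i.e. a copy of $S_t$ sitting inside $S_l$. First I would record the key structural fact: a copy of $S_t$ embedded inside a larger $S_l$ meets the rest of the graph only through its three corner vertices, and each corner is incident to at most two ``external'' edges (the corner has degree $4$ in $S_l$, of which $2$ edges stay inside the copy). So for a set $S$ equal to (the vertex set of) such a copy of $S_t$, the total degree $\sum_{i\in S} d_i$ is essentially $4 n_t$ up to the three low-degree corners, while the number of edges leaving $S$ is at most $6$, a constant independent of $t$.

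Next I would verify Inequality~(\ref{youngineq}) for every nonempty $S' \subseteq S$, which is the real content. Write $d(S',S)$ for the number of edges from $S'$ into $S$ and $d(S',\overline{S})$ for edges from $S'$ leaving $S$; then $\sum_{i\in S'} d_i = 2\,e(S') + d(S',S\setminus S') + d(S',\overline S)$ where $e(S')$ counts edges inside $S'$, and $d(S',S) = 2 e(S') + d(S', S\setminus S')$. Hence
\begin{equation}
\frac{d(S',S)}{\sum_{i\in S'} d_i} = \frac{2 e(S') + d(S',S\setminus S')}{2 e(S') + d(S',S\setminus S') + d(S',\overline S)} \ge \frac{1}{1 + d(S',\overline S)}\,,
\label{ratiobound}
\end{equation}
using that $2 e(S') + d(S',S\setminus S') \ge d(S',\overline S)$ fails in general, so this crude bound is not enough — I will need a genuine isoperimetric estimate showing $2 e(S') + d(S', S\setminus S')$, the internal degree sum of $S'$ within $S$, is at least $\tfrac{1-2r}{2r}$ times the boundary contribution $d(S',\overline S)$. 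Since $d(S',\overline S) \le 6$ always, it suffices to show: if $S'$ touches the corners of the copy of $S_t$ (the only way to have $d(S',\overline S)>0$), then $S'$ has a comparably large internal degree sum. If a corner vertex $v$ lies in $S'$, then inside the copy $v$ has $2$ neighbours; pushing one level down, the corner belongs to a small sub-triangle $S_1$ (just three mutually adjacent vertices) all of whose other vertices, if also in $S'$, each contribute. The cleanest route is to prove by induction on $t$ a lemma of the form: for every nonempty $S'$ contained in a copy of $S_t$, $2e(S') + d(S', S_t\setminus S') \ge c\cdot(\text{number of corners of }S_t\text{ in }S')$ is too weak, so instead I would aim for $2 e(S') + d(S', S_t \setminus S') \ge \min\{\,|S'|,\ \varepsilon(t)\,\}$-type growth, or more simply bound the ratio directly using that $S_t$ itself is ``almost'' $4$-regular and highly connected.

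\textbf{Main obstacle.} The crux — and the step I expect to fight with — is the uniform isoperimetric inequality (\ref{ratiobound})-style bound over \emph{all} subsets $S'$, including tiny ones: a single corner vertex $v$ alone has $d(\{v\},S) = 2$ and $d(\{v\},\overline S) = 2$, giving ratio exactly $1/2$, which is the boundary case and is fine; but a subset consisting of a corner $v$ together with a long ``thin tail'' reaching deep into the copy could, a priori, accumulate boundary cost without accumulating enough internal edges. The resolution is that the boundary cost is globally capped at $6$ (three corners, two external edges each), so I only need to show that any $S'$ which includes a corner also contains at least $\tfrac{1-2r}{2r}\cdot d(S',\overline S)$ worth of internal degree, and since $d(S',\overline S)\le 2$ per corner used, it is enough to find, near each used corner, a constant-size cluster of $S'$-vertices with $\Omega(1)$ internal edges — which fails if $S'$ is literally just the corners. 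So the honest statement must be: choose $k(r)$ so large that the sub-triangle $S_t$ has $n_t \ge $ some threshold, take $S$ to be that sub-triangle, and observe that for $S' = S$ itself the ratio is $\frac{3^t}{4 n_t - 6} = \frac{3^t}{6\cdot 3^{t-1}} = \tfrac12 \ge r$, while for proper subsets one uses a careful case analysis on how many corners $S'$ contains and a lower bound on internal edges in terms of boundary — this combinatorial case analysis on the recursive structure of $S_t$ is where the real work lies, and I would organize it as an induction on $t$ establishing that every $S'$ in a copy of $S_t$ satisfies $d(S',S) \ge \tfrac{1-2r}{2r}\,d(S',\overline{\text{copy}})$ once $t \ge t_0(r)$, with the three corners providing the only boundary and the near-$4$-regularity providing the internal edges.
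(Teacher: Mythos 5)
Your strategy---take $S$ to be an embedded copy of $S_t$, observe that it meets the rest of $S_l$ only through its three corners (at most $6$ external edges in total), and then verify Inequality~(\ref{youngineq}) for every nonempty $S'\subseteq S$---is sound in outline and is a genuinely different, more structural route than the paper's. The paper does not restrict $S$ to sub-copies at all: it takes arbitrary $S'\subseteq S$, writes $\sum_{i\in S'}d_i$ as $4(k-i)+2i$ using only that every degree is $2$ or $4$, splits the incident edges into a fraction $\alpha$ going to $S\setminus S'$ and a fraction $1-\alpha$ internal to $S'$ (counted once), and reduces the ratio to $(1+\alpha)/2\ge 1/2$. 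That computation is two lines, but it silently assumes that no edge incident to $S'$ leaves $S$---exactly the term $d(S',\overline S)$ that your decomposition makes explicit---so your accounting identity $\sum_{i\in S'}d_i = 2e(S')+d(S',S\setminus S')+d(S',\overline S)$ is the more honest starting point. (One convention slip: under the paper's definition $d(S',S)$ counts each edge inside $S'$ once, so $d(S',S)=e(S')+d(S',S\setminus S')$ rather than $2e(S')+d(S',S\setminus S')$; the condition ``ratio $\ge r$'' then becomes $(1-2r)\,e(S')+(1-r)\,d(S',S\setminus S')\ge r\,d(S',\overline S)$, which is the inequality you actually need to establish.)

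The genuine gap is that the central step is announced but never carried out. You correctly identify that the entire content of the theorem is a uniform bound over \emph{all} $S'$, including small subsets containing corners, and you even exhibit the tight case ($S'$ a single corner gives exactly $1/2$); but the resolution is left as ``a careful case analysis on how many corners $S'$ contains,'' ``an induction on $t$,'' and a lemma whose statement you revise twice without settling it. As written, no inequality covering all $S'$ is ever proved, so the argument does not close. You are, however, close: since $d(S',\overline S)=2b$ where $b\le 3$ is the number of corners of the copy lying in $S'$, and each such corner has exactly two in-copy edges, each contributing either to $e(S')$ or to $d(S',S\setminus S')$ (and, when a corner's in-copy neighbours also lie in $S'$, those degree-$4$ neighbours contribute further internal degree), a finite check of the local configurations around a corner establishes $(1-2r)\,e(S')+(1-r)\,d(S',S\setminus S')\ge 2rb$ with no induction on $t$; the only role of large $t$ is the case $S'=S$ itself, where $e(S)=3^t$ against a boundary of $6$ gives a ratio $3^t/(2\cdot 3^t+6)$ tending to $1/2$ from \emph{below}, so one must take $r$ strictly below $1/2$ and choose $t$ accordingly (note also that your figure $4n_t-6$ for $\sum_{i\in S}d_i$ is only correct when the copy's corners coincide with corners of $S_l$; for an interior copy the sum is $4n_t$ and the ratio never reaches $1/2$). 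Until that case analysis is written down, the proposal remains a plan rather than a proof.
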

\begin{proof}
Let $S_l = (V,E)$ be a Sierepinski Graph of level $l$, $ l \geq 1 $. Let $S$ be a subset of its set of vertices $V$, such that $|S| = k$ and $S'$ a nonempty subset of $S$. Each of the vertices in $S$ and $S'$ have degree either $2$ or $4$. Let, also, $i$, $0 \leq i \leq 3$ be the number of vertices of degree $2$ in $S'$. We compute the numerator and denominator of the fraction in Inequality~(\ref{youngineq}).

If by $\alpha$, $0 \leq \alpha \leq 1$, we denote the fraction of edges between vertices in $S'$ and vertices in $S - S'$, then with respect to the numerator of the fraction in Inequality~(\ref{youngineq}), we have
\begin{eqnarray}
d(S', S) &=& \alpha [4(k-i)+2i] + (1-\alpha)\ \frac{4(k-i)+2i}{2}\\
		&=& \alpha [4(k-i)+2i] + (1-\alpha)\ [2(k-i)+i].
\label{numer}
\end{eqnarray}
since edges between vertices in $S'$ should be counted only once for the calculation of $d(S',S)$ in~(\ref{numer}). As for the denominator in Inequality~(\ref{youngineq}), we have
\begin{equation}
\sum_{i \in S'} d_i = 4(k-i) + 2i.
\label{denom}
\end{equation}
From~(\ref{numer}) and~(\ref{denom}) we have
\begin{equation}
\frac{d(S',S)}{\sum_{i \in S'} d_i} = \frac{\alpha[4(k-i)+ 2i]+(1-\alpha)\ [2(k-i) +i]}{4(k-i) + 2i}.
\label{fraction}
\end{equation}
After some straightforward manipulations,~(\ref{fraction}) reduces to $\frac{1 + \alpha}{2}$.

Therefore, Inequality~(\ref{youngineq})
holds for any $r$ and $k$. We conclude, that $S_l$, for $l \geq 1$,
i.e. the Sierpinski Graphs, form a close-knit family of graphs.
\hfill $\square$
\end{proof}

\section{Existence of highly organized structures in networks of interacting agents}
\label{KC}
In this section, we investigate the conditions upon which highly organized
structures can appear in networks of interacting agents. These structures may represent any organized community of individuals with a common aim or shared beliefs. These conditions, loosely speaking, rely on the structures possessing certain recognizable {\em regularity properties} that can be used to {\em succintly} describe them. These {\em regularity} and {\em succinctness} conditions model the ``{\em organized structure}'' notion.
For concreteness, we will fix our organized, regular, structures to belong in the class of {\em Sierpinski Graphs} which, in addition, have the property of being {\em close-knit} (see Section~\ref{sierpinski-networks}).

More specifically, we will study the problem of the existence of Sierpinski Graphs, as subgraphs, in sufficiently large graphs, which can model evolving interacting agents, social networks and societies.
In this section we deploy techniques from Kolmogorov Complexity and Ramsey Theory. In what follows, we briefly state the main definitions and some useful results from {\em Kolmogorov Complexity} and {\em Ramsey Theory}.
We, then, apply both theories in order to investigate conditions that enable or hinder the emergence of organized substructures in evolving structures which are modelled as graphs.

\subsection{Kolmogorov Complexity}

In informal terms (see~\cite{kolm,sol,chaitin}) the {\em Kolmogorov Complexity} of a (binary) string $x$ is the {\em length} of the {\em shortest algorithmic description} of $x$.
In other words, the {\em Kolmogorov Complexity}, denoted by $C(x)$, of
a finite string $x$ is the length of the {\em shortest program} (or Turing machine in general) encoding as a binary sequence of bits, which produces $x$ as output, without taking any input. Similarly, the {\em conditional} Kolmogorov Complexity of $x$ given $y$, denoted by $C(x|y)$, is the length of the shortest program which produces $x$ as output given $y$ as input.
It can be shown that $C(x)$ is, in some sense, {\em universal} in that it
does not depend on the choice of the programming language or Turing machine model, up to  fixed additive constant, which depends on this choice but not on $x$.

In this paper, our focus is on graphs. We can deploy the notion and properties of Kolmogorov Complexity by encoding graphs with strings as follows
(see, e.g.,~\cite{LiVi19}):
\begin{definition}
\label{def.gc}
Each labelled graph $G=(V,E)$ on $n$ nodes $V=\{1,2,\ldots, n\}$ can be represented (up to automorphism) by a binary string $E(G)$ of length ${n \choose 2}$. We simply assume a fixed ordering of the ${n \choose 2}$ possible edges in an $n$-node graph, e.g. lexicographically, and
let the $i$th bit in the string indicate presence (1) or absence (0) of
the $i$th edge. Conversely, each binary string of length ${n \choose 2}$ encodes an $n$-node graph. Hence we can identify each such graph with
its binary string representation.
\end{definition}
\begin{definition}\label{def.rg}
A labelled graph $G$ on $n$ nodes has {\em randomness deficiency}\index{randomness deficiency} at most $\delta (n)$, and is called
$\delta (n)$-{\em random}, if it satisfies
\begin{equation}\label{eq.KG}
C(E(G)|n ) \geq {n \choose 2} - \delta (n).
\end{equation}
\end{definition}
Also, the following holds (see, e.g.,~\cite{LiVi19}):
\begin{lemma}\label{lem.frac}
A fraction of at least
$1 - 1/2^{\delta (n)}$ of all labelled  graphs $G$ on $n$ nodes is $\delta (n)$-random.
In particular, for $\delta(n) = \log n$, a fraction
of $(1 - \frac{1}{n})$ of all graphs on $n$ vertices is $\log n$-random.
\label{def-lemma}
\end{lemma}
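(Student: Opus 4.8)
The plan is to prove the bound by the standard incompressibility counting argument. First I would fix $n$ and write $N = \binom{n}{2}$, so that by Definition~\ref{def.gc} the labelled graphs on $n$ nodes are in bijection with the $2^N$ binary strings of length $N$. By Definition~\ref{def.rg}, a graph $G$ \emph{fails} to be $\delta(n)$-random exactly when $C(E(G)\mid n) < N - \delta(n)$, i.e.\ when $E(G)$ is the output, on input $n$, of some program whose length is at most $N - \delta(n) - 1$.

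Next I would bound the number of such ``compressible'' graphs. The number of binary programs of length at most $N - \delta(n) - 1$ is $\sum_{i=0}^{N-\delta(n)-1} 2^{i} = 2^{N-\delta(n)} - 1 < 2^{N-\delta(n)}$. A deterministic program run on a fixed input produces at most one output, so the map sending a non-$\delta(n)$-random graph to one of its shortest descriptions is injective into this set of short programs; hence the number of non-$\delta(n)$-random graphs on $n$ nodes is strictly less than $2^{N-\delta(n)}$.

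Dividing by the total count $2^{N}$, the fraction of non-random graphs is at most $2^{-\delta(n)} = 1/2^{\delta(n)}$, so the fraction of $\delta(n)$-random graphs is at least $1 - 1/2^{\delta(n)}$, giving the first claim. For the ``in particular'' clause I would simply substitute $\delta(n) = \log n$ (logarithm to base two), whence $1/2^{\delta(n)} = 1/n$ and the fraction of $\log n$-random graphs on $n$ vertices is at least $1 - 1/n$.

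The argument is essentially routine and follows~\cite{LiVi19}; the only points needing care are (i) keeping the conditioning on $n$ fixed throughout, so that ``at most one output per program'' is legitimate, and (ii) the minor book-keeping that when $\delta(n)$ is not an integer (e.g.\ $\log n$ for $n$ not a power of two) the threshold $N - \delta(n) - 1$ should be read as the corresponding integer $\lfloor N - \delta(n) \rfloor - 1$; this only strengthens the counting inequality and does not affect the stated bounds. I do not anticipate a genuine obstacle here — it is a clean pigeonhole estimate.
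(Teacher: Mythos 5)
Your proof is correct and is exactly the standard counting/pigeonhole argument from~\cite{LiVi19}, which is all the paper itself offers (it states the lemma with a citation and gives no proof of its own). The bound on the number of programs of length below $\binom{n}{2}-\delta(n)$, the injectivity of ``graph $\mapsto$ shortest description given $n$,'' and the substitution $\delta(n)=\log n$ are all handled properly, so nothing further is needed.
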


\begin{definition}
Let $G=(V,E)$ be a labelled graph on $n$ nodes. Consider a labelled graph $H$ on $k$ nodes $\{1,2, \ldots,k\}$. Each subset of $k$ nodes of $G$ induces a subgraph $G_k$ of $G$. The subgraph $G_k$ is an  ordered labelled {\em occurrence} of $H$  when we obtain $H$ by relabelling the nodes $i_1< i_2<  \cdots < i_k$ of $G_k$ as $1,2, \ldots, k$. 
\end{definition}
%

\subsection{Ramsey Theory}
We first provide the definition of the $r$-colouring of a set in Ramsey Theory. 
\begin{definition} ($r$-colouring)
Let $S$ be a set and $r\in Z_{+}$. An $r$-colouring of $S$ is a function $f:S\rightarrow \{1,2,\ldots,r\}$.
\end{definition}	

As discussed in the introduction, Ramsey Theory is concerned with questions involving the appearance of certain patterns in sufficiently large graphs. 
For instance, Ramsey Theory started with the following question: given a graph $H$, determine the Ramsey number $r(H)$, which is defined as the
{\em smallest} natural number $n$ such that {\em any} two-colouring of $E(K_n)$ contains a monochromatic copy of $H$.
In~\cite{CFS12} the following is proved:
\begin{theorem} \label{thm:CRSTBound}
	There exists a constant $c$ such that any graph $H$
	on $k$ vertices with maximum degree $\Delta$ satisfies
	\[r(H) \leq k\ 2^{c \Delta \log \Delta}.\]
\end{theorem}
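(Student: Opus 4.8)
\medskip

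\noindent\textbf{Proof proposal.} The bound is the one obtained in \cite{CFS12}; the plan is to reconstruct that argument, which is a probabilistic \emph{embedding} argument and deliberately avoids Szemer\'edi's regularity lemma — the latter being precisely what makes the constant $c(\Delta)$ in the original Chv\'atal--R\"odl--Szemer\'edi--Trotter proof \cite{CRST83} grow like a tower in $\Delta$. As a first step one exploits the structure of $H$: since $\Delta(H)\le\Delta$, a greedy colouring partitions $V(H)$ into $\Delta+1$ independent sets $W_0,\dots,W_\Delta$, so every edge of $H$ joins two distinct classes, and one also fixes an ordering $v_1,\dots,v_k$ of $V(H)$ in which each $v_i$ has at most $\Delta$ neighbours among its predecessors. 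It then suffices to locate, inside a $2$-coloured $K_N$ with $N=k\cdot 2^{c\Delta\log\Delta}$, a monochromatic subgraph that behaves like a pseudorandom $(\Delta+1)$-partite graph with parts of size $\ge k$.

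To find such a host, average over the two colours: one class — say red, with graph $G$ — has edge density at least $1/2$. Apply dependent random choice to $G$: sampling a set $T$ of about $t\approx\Delta$ vertices and passing to the common neighbourhood of $T$ extracts a vertex set $U$ with $|U|\ge N\cdot 2^{-O(\Delta\log\Delta)}\ge(\Delta+2)k$ in which every set $S\subseteq U$ of size at most $\Delta$ has close to its expected co-degree, i.e.\ common neighbourhood of size at least a $2^{-|S|}$-fraction of $|U|$. The factor $2^{-\Theta(t)}=2^{-\Theta(\Delta)}$ lost in passing from $N$ to $|U|$ is inherent to sampling against a density-$\tfrac12$ graph, while the union bound over the $\binom{|U|}{\Delta}$ small sets contributes the additional $\log\Delta$ to the exponent. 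If neither colour is anywhere dense enough to run this, each colouring is forced to be nearly ``$(\Delta+1)$-partite-free'', which is impossible for a $2$-colouring of $K_N$ once $N>r(\Delta+1)$; this dichotomy replaces the majority argument on the reduced graph used in \cite{CRST83}, and it is the only point at which the second colour is needed.

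Given such a $U$, embed $v_1,\dots,v_k$ into $U$ one vertex at a time, maintaining for each not-yet-embedded $v_j$ a candidate set $C_j\subseteq U$ with the invariant $|C_j|\ge 2^{-d^-_j}|U|$, where $d^-_j$ is the number of already-embedded neighbours of $v_j$. When $v_i$ is reached, its embedded neighbours number at most $\Delta$, so the pseudorandomness of $U$ forces all but a vanishing fraction of $C_i$ to be ``good'' — adjacent to roughly half of $C_j$ for every unembedded neighbour $v_j$ of $v_i$ — and since $|C_i|\ge 2^{-\Delta}|U|>k$ exceeds the number of already-used images, a good unused vertex is available; picking it as the image of $v_i$ and updating the $C_j$'s preserves the invariant. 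After $k$ steps one has a copy of $H$ inside $G[U]\subseteq G$, and choosing $c$ large enough that the extraction yields $|U|\ge(\Delta+2)k$ finishes the proof.

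Everything hinges on the middle step. A single, naive application of dependent random choice is too lossy: beating the $\binom{N}{\Delta}$ union bound pushes the sample size $t$ up toward order $\Delta\log k$, but each additional unit of $t$ shrinks the retained set by the density factor, and against density $\tfrac12$ these two pressures cannot be reconciled with a bound that is \emph{linear} in $k$. The real content of \cite{CFS12} is the bookkeeping that circumvents this — processing $H$ colour class by colour class, controlling co-degrees only of the sets that actually arise as embedded neighbourhoods during the greedy process, and re-using the sampled sets across classes — together with the verification that the total loss is $\Delta^{O(\Delta)}=2^{O(\Delta\log\Delta)}$ and no worse. I expect this amortisation to be the genuinely hard part; the $(\Delta+1)$-colouring reduction and the greedy embedding itself are, by comparison, routine.
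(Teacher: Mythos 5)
First, note that the paper does not prove this statement at all: Theorem~\ref{thm:CRSTBound} is quoted verbatim from \cite{CFS12} and used as a black box, so there is no internal proof to compare yours against. Judged on its own terms, your write-up correctly identifies the genre of the Conlon--Fox--Sudakov argument (dependent random choice plus a greedy vertex-by-vertex embedding, deliberately avoiding the regularity lemma that forces the tower-type constant in \cite{CRST83}), and you are also right that a single naive application of dependent random choice against a density-$\tfrac12$ colour cannot deliver a bound linear in $k$ with only a $2^{O(\Delta\log\Delta)}$ multiplicative loss. But that is precisely the step you then decline to carry out: your final paragraph concedes that ``the real content of \cite{CFS12}'' is the amortised bookkeeping that extracts the pseudorandom $(\Delta+1)$-partite host, and you only describe what that bookkeeping must achieve rather than doing it. As it stands this is a proof plan with the central lemma missing, not a proof.

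There is also a concrete soft spot in the dichotomy you propose. Dependent random choice applied to the majority colour always ``runs'' --- density $\ge\tfrac12$ is all it needs --- so the alternative branch (``if neither colour is anywhere dense enough \dots which is impossible once $N>r(\Delta+1)$'') is not a meaningful case split, and invoking $r(\Delta+1)$ there does not substitute for what is actually required. The genuine difficulty for non-bipartite $H$ is that co-degree control inside a single extracted set $U$ of one colour does not by itself give the \emph{pairwise} density between all $\binom{\Delta+1}{2}$ pairs of parts that the greedy embedding needs; the published argument handles this by iterating over subsets and switching colours when the current colour fails to be locally dense, with careful control of the accumulated loss. Until you either supply that iteration (or the CFS amortisation you allude to) with explicit parameters showing the total loss is $2^{O(\Delta\log\Delta)}$, the bound $r(H)\le k\,2^{c\Delta\log\Delta}$ is asserted, not derived. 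Since the surrounding paper only needs the statement as an imported result, the pragmatic fix is to cite \cite{CFS12} for the proof rather than reconstruct it.
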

In this paper, we are interested in the {\em Induced Ramsey Number}
of a given graph $H$, which is defined as follows:

\begin{definition} (Induced Ramsey numbers)
A graph $H$ is an {\it induced subgraph} of a graph $H$ if $V(H) \subset
V(G)$ and two vertices of $H$ are adjacent if and only if they are adjacent in
$G$. The {\it induced Ramsey number} $r_{\textrm{ind}} (H)$ is defined as the {\em minimum} integer
for which there is a graph $G$ on $r_{\textrm{ind}} (H)$ vertices such that
every two-colouring of the edges of $G$ contains an induced monochromatic copy of $H$.
\label{def-induced-ramsey}
\end{definition}
Note that an induced monochromatic copy of $H$ is, also,
an {\em induced} copy of $H$ as an induced subgraph,
in the ordinary graph-theoretical sense, regardless of the colour of the edges of $G$.

There is a number of results that provide upper bounds on induced
Ramsey numbers for sparse graphs.
For instance, Beck in~\cite{Be}
focused on the case when $H$ is a tree. Also, Haxell, Kohayakawa, and \L
uczak \cite{HaKoLu} showed that the cycle of length $k$ has induced
Ramsey number linear in $k$. Moreover,
\L uczak and R\"odl \cite{LuRo} proved that the induced Ramsey
number of a graph with bounded maximum degree is at most polynomial in the
number of its vertices, settling a conjecture of Trotter.
More precisely, they proved  the following:
\begin{theorem}
For every integer $d$, there is a constant $c_d$ such that every graph $H$ on
$k$ vertices and maximum degree at most $d$ satisfies
$r_{\textrm{ind}}(H) \leq k^{c_d}$.
\label{upper-induced}
\end{theorem}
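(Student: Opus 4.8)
For each fixed $d$ the plan is to produce a single host graph $G$ on $N=k^{c_d}$ vertices with the property that \emph{every} red/blue colouring of $E(G)$ contains an induced monochromatic copy of $H$; a suitable sparse (pseudo)random graph --- for instance $G(N,p)$ for a carefully chosen density $p=p(k,d)$, or, in the spirit of Lemma~\ref{lem.frac}, a Kolmogorov-random graph of that density --- will do, and then $r_{\textrm{ind}}(H)\le N=k^{c_d}$. The argument would be assembled from four pieces: (i) a reduction of ``induced'' to ``partite'' that exploits $\Delta(H)\le d$; (ii) the choice of the density $p$; (iii) a reduction of an arbitrary $2$-colouring of $E(G)$ to a \emph{monochromatic partite template} for $H$, driven by the bounded-degree Ramsey bound of Theorem~\ref{thm:CRSTBound}; and (iv) an induced embedding/counting lemma inside the resulting sparse pseudorandom structure. (This is essentially the random-graph route of \cite{LuRo}.)

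\textbf{The steps, in order.} First I would properly $(d+1)$-colour $H$, writing $V(H)=W_1\cup\cdots\cup W_{d+1}$ with each $W_i$ independent and $|W_i|\le k$, and build $G$ with a matching partition $V(G)=U_1\cup\cdots\cup U_{d+1}$ into \emph{independent} sets, always embedding $W_i$ into $U_i$; then the non-edges of $H$ inside a class come for free and only the cross-class non-edges of $H$ must still be realised as non-edges of $G$. This is what forces $p$ to be \emph{small}: a vertex of $W_i$ may be non-adjacent to $\Omega(k)$ vertices already embedded into some other class, so to preserve a positive fraction of its candidate images these $O(k)$ non-adjacency constraints may only cost a factor $(1-p)^{O(k)}\approx e^{-O(pk)}$, which with $N=k^{c_d}$ and polynomially-large blocks (below) points to $p$ of order $(\log k)/k$. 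Next, given an arbitrary $2$-colouring of $E(G)$, I would cut each $U_i$ into $T$ equal blocks, where $T$ is just above the (partite) bounded-degree Ramsey number of $H$ --- of order $k$ up to a constant depending only on $d$, by Theorem~\ref{thm:CRSTBound}; on every pair of blocks from different classes the majority colour occupies at least a $p/2$ fraction of the pairs, so recording it gives a $2$-colouring of the complete $(d+1)$-partite graph on the $(d+1)T$ blocks, and partite Ramsey then supplies a monochromatic --- say red --- partite copy of $H$ in it: blocks $B_v$, one per $v\in V(H)$ with $B_v\subseteq U_{c(v)}$, such that $(B_u,B_v)$ is red and $\Omega(p)$-dense (and pseudorandom) for every $uv\in E(H)$. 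Finally, I would embed $H$ into this blown-up red structure one vertex at a time, sending $v$ to a vertex of $B_v$ that is red-adjacent to the images of its (at most $d$) already-embedded neighbours and non-adjacent in $G$ to the images of its already-embedded cross-class non-neighbours; pseudorandomness keeps the admissible set of size at least $|B_v|\,p^{O(d)}(1-p)^{O(k)}$, which the choice of $p$ and $c_d$ makes positive, and the copy so produced is induced and monochromatic.

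\textbf{The main obstacle.} Almost all of the difficulty is concentrated in piece (iv), the sparse induced embedding lemma, for two reasons. First, a single vertex of $H$ can impose up to $\Omega(k)$ cross-class \emph{non}-adjacency constraints, so one cannot reason as for ordinary (non-induced) copies: one needs simultaneous concentration for the sizes of intersections of up to $d$ red-neighbourhoods \emph{and} of up to $\Omega(k)$ $G$-non-neighbourhoods, uniformly over all the small witness configurations that arise --- a second-moment / sparse-regularity argument rather than a one-line union bound; note too that the union bound must be taken over vertex configurations of $G$, not over the doubly-exponentially many colourings, which is why steps (ii)--(iii) first strip the colouring down to a dense red template. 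Second, the parameters must be balanced against one another: $p$ small enough that $(1-p)^{O(k)}=e^{-O(pk)}$ costs only a factor $k^{-O(d)}$, yet large enough (and the blocks, hence $N$, large enough) that common red-neighbourhoods of size $\approx|B_v|p^{d}$ are non-empty with room for concentration --- which is exactly what pins $c_d$ down as a function of $d$. By contrast pieces (i) and (iii) are bookkeeping: (i) is a proper colouring, and (iii) is a direct application of (the partite form of) Theorem~\ref{thm:CRSTBound}, whose bound has the same order as stated.
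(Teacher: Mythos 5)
First, be aware that the paper does not prove Theorem~\ref{upper-induced} at all: it is quoted from \L uczak and R\"odl \cite{LuRo} and used as a black box (the surrounding text even records that their proof yields a tower-type bound on $c_d$). So there is no in-paper argument to compare yours against; your write-up has to stand on its own as a proof, and at present it does not.

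The outline you give (sparse pseudorandom host, partite reduction via a proper $(d+1)$-colouring of $H$, majority-colour cleaning of the $2$-colouring, then a vertex-by-vertex induced embedding) is a fair description of the modern route to polynomial induced Ramsey bounds --- closer in spirit to the Conlon--Fox--Sudakov \cite{CFS12} style of argument than to \L uczak and R\"odl's original proof --- but the step you label (iv) is not a deferrable technicality: it \emph{is} the theorem. The concrete gap sits in the hand-off from (iii) to (iv). The majority-colour argument gives you only that the red graph between two blocks $B_u, B_v$ has edge density at least $p/2$; it gives you no pseudorandomness for the \emph{red} subgraph, because the adversary colours after seeing $G$. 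Pseudorandomness of $G$ controls $G$-neighbourhoods and $G$-non-neighbourhoods, but your embedding needs intersections of up to $d$ \emph{red} neighbourhoods to remain large, and a red bipartite graph of density $p/2$ can have all its edges concentrated on a small vertex subset, emptying every candidate set after two embedding steps. Closing this requires genuinely new machinery (dependent random choice, a direct count of induced monochromatic copies, or a sparse-regularity-type argument), none of which appears; you name the obstacle accurately but do not overcome it, and the balancing of $p$, the block sizes, and $c_d$ is only gestured at. As it stands the proposal is a plausible plan, not a proof.
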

The proof  provides an upper bound on $c_d$ that is a tower of $2$'s of height proportional to $d^2$.
Since a Sierpinski Graph has maximum degree equal to 4,
then an immediate corollary from Theorem~\ref{upper-induced},
applied for $d = 4$, is the following:
\begin{corollary}
Let $H = S_l$ be a labelled Sierpinski Graph of level $l$ with $n_l$ vertices.
Then $r_{\textrm{ind}} (S_l) \leq n_l^{c_d}$, where $c_d$ is a positive constant, independent from $n_l$ and, thus, from $l$.
\label{corollary-pol}
\end{corollary}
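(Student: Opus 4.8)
The plan is to obtain the statement as an immediate instance of Theorem~\ref{upper-induced}, the only real content being to verify that every Sierpinski Graph has maximum degree bounded by a fixed constant independent of its level.

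First I would invoke the degree properties of $S_l$ recorded above: for $l = 1$ the graph $S_1$ is a triangle, so every vertex has degree $2$, while for every $l > 1$ all vertices of $S_l$ have degree $4$ except the three ``corner'' vertices, which have degree $2$. Consequently $\Delta(S_l) \le 4$ for all $l \ge 1$, and this bound does not depend on $l$.

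Next I would apply Theorem~\ref{upper-induced} with the fixed value $d = 4$. The theorem supplies a constant $c_4$ that depends only on $d = 4$ --- and in particular not on the particular graph to which the theorem is applied --- such that every graph on $k$ vertices with maximum degree at most $4$ has induced Ramsey number at most $k^{c_4}$. Taking $H = S_l$, which has $k = n_l$ vertices and maximum degree at most $4$, yields $r_{\textrm{ind}}(S_l) \le n_l^{c_4}$, and setting $c_d := c_4$ finishes the argument.

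There is essentially no obstacle here; the single point worth stressing is the uniformity of the exponent. Because $\Delta(S_l) \le 4$ at every level $l$, the same constant $c_4$ serves all Sierpinski Graphs simultaneously, so the exponent in $n_l^{c_d}$ is an absolute constant that does not grow with $l$, even though $n_l = \frac{3}{2}(3^{l-1}+1)$ grows exponentially in $l$. By the remark following Theorem~\ref{upper-induced} one could moreover take $c_4$ to be a tower of $2$'s of height proportional to $4^2$, but for the purposes of the corollary only its finiteness and its independence of $l$ are needed.
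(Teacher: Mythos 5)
Your proof is correct and follows exactly the paper's route: the corollary is stated as an immediate consequence of Theorem~\ref{upper-induced} applied with $d=4$, using the fact that every $S_l$ has maximum degree at most $4$ independently of $l$. Your additional emphasis on the uniformity of the exponent $c_4$ across all levels is a welcome clarification but does not change the argument.
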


\subsection{The emergence of organized subgraphs in evolving graphs}
Although random graphs (see e.g.~\cite{bollobas85}) are a powerful tool for proving limit properties, i.e. properties that hold in the limit as the graph size grows asymptotically, its main limitation is that it does not say anything about {\em specific}, finite, graph instances, for a fixed size $n$.
Kolmogorov complexity theory, on the other hand, focuses on the study of specific, finite, objects (graphs in our case). Based on this theory, we can prove the following:
\begin{theorem}
Let $S_l$ be a labelled Sierpinski Graph on $n_l$ vertices. Let $G$ be a labelled incompressible graph on $n$ vertices that contains $S_l$ as an induced subgraph. Then $n  \geq 2^{\frac{n_l-1}{2}}$.
\label{sierpinski-theorem}
\end{theorem}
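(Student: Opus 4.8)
The plan is a standard incompressibility (minimum-description-length) argument. Assuming $G$ is incompressible \emph{and} contains an induced occurrence of $S_l$, I would build a program that, given $n$, prints $E(G)$ and is shorter than $\binom{n}{2}$ unless $n$ is already at least $2^{(n_l-1)/2}$; since incompressibility forbids such a program, the bound follows. The mechanism is that an induced occurrence of $S_l$ makes $\binom{n_l}{2}$ of the bits of $E(G)$ redundant, while the only new information needed to exploit the redundancy is the \emph{location} of the occurrence.

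Concretely, suppose the occurrence sits on vertices $i_1<i_2<\cdots<i_{n_l}$ of $G$; by the definition of an occurrence, relabelling these vertices in increasing order turns the induced subgraph into $S_l$, so the $\binom{n_l}{2}$ entries of $E(G)$ recording adjacencies inside $\{i_1,\dots,i_{n_l}\}$ coincide, slot for slot, with the string $E(S_l)$. I would then describe $E(G)$ given $n$ by three consecutive blocks: (i) a prefix-free encoding of the level $l$, which determines $n_l$ and the labelled graph $S_l$ and costs $O(\log l)=O(\log\log n_l)$ bits; (ii) the index of $\{i_1,\dots,i_{n_l}\}$ in a fixed enumeration of the $n_l$-subsets of $\{1,\dots,n\}$, costing $\lceil\log_2\binom{n}{n_l}\rceil$ bits; and (iii) the $\binom{n}{2}-\binom{n_l}{2}$ bits of $E(G)$ recording edges not internal to the occurrence. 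A fixed decoder, reading $n$, parses $l$ (hence $n_l$, $S_l$, and the length of block (ii)), recovers the subset, writes $E(S_l)$ into the matching positions of the output, and fills the remaining positions from block (iii). Hence
\[
C(E(G)\mid n)\ \le\ \binom{n}{2}-\binom{n_l}{2}+\log_2\binom{n}{n_l}+O(\log\log n_l).
\]

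Incompressibility of $G$ gives $C(E(G)\mid n)\ge\binom{n}{2}-O(1)$, so $\binom{n_l}{2}\le\log_2\binom{n}{n_l}+O(\log\log n_l)$. Using $\log_2\binom{n}{n_l}\le n_l\log_2 n-\log_2(n_l!)$, substituting, dividing by $n_l$ and rearranging gives
\[
\log_2 n\ \ge\ \frac{n_l-1}{2}+\frac{\log_2(n_l!)-O(\log\log n_l)}{n_l}.
\]
Since $\log_2(n_l!)=\Theta(n_l\log n_l)$ dominates the $O(\log\log n_l)$ overhead for every level $l\ge 3$ — and for $l=1,2$ the claim is trivial, because there $2^{(n_l-1)/2}\le n_l$ while $G$ must anyway have at least $n_l$ vertices to contain $S_l$ — the fractional term is non-negative, so $\log_2 n\ge (n_l-1)/2$, i.e.\ $n\ge 2^{(n_l-1)/2}$.

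The conceptual content — redundancy of the internal edges and the subset-counting estimate for block (ii) — is routine; the only point that needs care is making the three-block description genuinely decodable so that the block lengths add up, and verifying that the $\log_2(n_l!)$ slack concealed in $\binom{n}{n_l}$ really does absorb the $O(\log\log n_l)$ cost of naming $l$, since this is exactly what pins the exponent at $(n_l-1)/2$ rather than at $(n_l-1)/2-o(1)$.
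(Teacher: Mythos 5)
Your proposal is correct and follows essentially the same route as the paper's proof: an incompressibility argument in which the induced occurrence of $S_l$ lets you delete the $\binom{n_l}{2}$ internal-edge bits at the cost of $\log\binom{n}{n_l}$ bits naming its location, forcing $n_l\log n \gtrsim \binom{n_l}{2}$. The only (harmless) difference is bookkeeping: the paper also charges $\log n_l!$ for a vertex ordering and cancels it via $\binom{n}{n_l}\le n^{n_l}/n_l!$, whereas you keep that $\log n_l!$ as slack to absorb the $O(\log\log n_l)$ cost of naming $l$ and you are more careful about making the description prefix-decodable.
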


\begin{proof}
Let $G$ to be a labelled {\em incompressible} graph with $n$ vertices whose encoding $E(G)$ as a binary string (see Definition~\ref{def.gc}) has length $l(E(G))$. Since $G$ is incompressible, it holds
\begin{equation}
C(E(G)|n,P) \geq \frac{n(n-1)}{2}
\label{incomp}
\end{equation}
where $P$ is a program that can reconstruct $E(G)$ from the value $n$
and an alternative encoding $E'(G)$ as described below.

Since $G$ contains, from our assumption, $S_l$ as an {\em induced} subgraph, we can describe $G$ by forming an alternative encoding $E'(G)$ constructed from $E(G)$ as follows:
\begin{enumerate}
\item We add to the encoding $E(G)$ the description of the Sierpinski Graph subgraph of $G$. In order to describe such a graph of $n_l$ vertices in the graph $G$ of $n$ vertices, we need $\log{n \choose n_l}$ bits to denote the subset of the $n_l$ vertices and $\log n_l!$ bits to denote the specific ordering $i_1i_2 \ldots i_{n_l-1}i_{n_l}$ that provides the structure of a Sierpinski Graph, as we see it in the example graph of Figure~\ref{pattern2}, i.e. we form the Sierpinski Graph on $n_l$ vertices and label them from top to bottom and from left to right (e.g. see Figure~\ref{pattern2} for an indication of the ``top'' and the ``bottom'' of a Sierpinski Graph).
In total, to describe this Sierpinski Graph subgraph, we need at most $\log{n \choose n_l} +\log n_l!$ bits and, since ${n \choose n_l} \leq \frac{n^n_l}{n_l!}$, we need, at most
\begin{equation}
\log \frac{n^n_l}{n_l!} + \log n_l! = n_l \log n
\label{upzigzag}
\end{equation}
bits.
\item We delete, from $E(G)$, {\em all} the bits that encode the
edges of the Sierpinski Graph subgraph, saving $\frac{n_l(n_l-1)}{2}$ bits.
\end{enumerate}
Now, it is easy to provide an algorithm $P$ that, given $E'(G)$, constructs the Sierpinski Graph subgraph on $n_l$ vertices given as input the number $n$ of vertices of the graph, the value of $n_l$ and the specific ordering
$i_1i_2 \ldots i_{n_l-1}i_{n_l}$, of the vertices of the Sierpinski Graph.

The length of the new encoding is, at most
\begin{equation}
l(E'(G)) = l(E(G)) +  n_l \log n - \frac{n_l(n_l-1)}{2}.
\label{altencoding}
\end{equation}
Given the value of $n$, the program $P$ can reconstruct $E(G)$ from $E'(G)$. Thus
\begin{equation}
C(E(T)|n,P) \leq l(E'(T)).
\label{incomp2}
\end{equation}
Since $G$ is incompressible, it must hold $l(E'(G)) \geq l(E(G))$. From~(\ref{altencoding}), this can, only, hold if
\begin{equation}
n_l \log n \geq  \frac{n_l(n_l-1)}{2} \Leftrightarrow n \geq 2^{\frac{n_l-1}{2}}
\label{kolm-ineq}
\end{equation}
which is the required. \hfill $\square$
\end{proof}
Theorem~\ref{sierpinski-theorem} states that no incompressible graph with {\em fewer} than $2^\frac{n_l-1}{2}$ vertices can contain $S_l$ as an {\em induced} subgraph. Consequently, an incompressible graph cannot contain $S_l$ as a {\em monochromatic} induced subgraph either, in {\em any} two colouring of its edges.
Thus, from Corollary~\ref{corollary-pol} and Theorem~\ref{sierpinski-theorem}, we have the following:
\begin{theorem}
No incompressible graph on $r_{\textrm{ind}} (S_l)$ vertices can contain $S_l$ as an induced subgraph except, possibly, for a finite set of values for $l$.
\label{zig-zag-corol}
\end{theorem}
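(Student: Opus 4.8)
The plan is to collide the two bounds already in hand. Corollary~\ref{corollary-pol} gives a \emph{polynomial} upper bound $r_{\textrm{ind}}(S_l) \le n_l^{c_d}$, with $c_d$ an absolute constant (it is the constant of Theorem~\ref{upper-induced} instantiated at $d=4$, hence independent of $l$), whereas Theorem~\ref{sierpinski-theorem} gives an \emph{exponential} lower bound on the number of vertices of any incompressible graph that contains $S_l$ as an induced subgraph. These two estimates, read against each other, can both hold only when $n_l$, equivalently $l$, is small.

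Concretely I would proceed as follows. First, suppose $l$ is a level for which there does exist an incompressible graph $G$ on $r_{\textrm{ind}}(S_l)$ vertices containing $S_l$ as an induced subgraph. Applying Theorem~\ref{sierpinski-theorem} with $n := r_{\textrm{ind}}(S_l)$ forces $r_{\textrm{ind}}(S_l) \ge 2^{(n_l-1)/2}$. On the other hand, Corollary~\ref{corollary-pol} gives $r_{\textrm{ind}}(S_l) \le n_l^{c_d}$. Combining the two yields $2^{(n_l-1)/2} \le n_l^{c_d}$, and taking base-$2$ logarithms, $\frac{n_l-1}{2} \le c_d \log n_l$. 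Since $n_l = \tfrac{3}{2}(3^{l-1}+1)$ is strictly increasing and unbounded in $l$, and a linear function of $n_l$ eventually dominates $c_d \log n_l$, this last inequality holds only while $n_l$ stays below some threshold depending only on $c_d$; that is, only for $l$ in a fixed finite set $\{1,\dots,l_0\}$ with $l_0 = l_0(c_d)$. Hence for every $l > l_0$ the assumption is impossible, which is exactly the assertion of the theorem.

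There is no deep obstacle here: the argument is a routine asymptotic comparison once Theorem~\ref{sierpinski-theorem} and Corollary~\ref{corollary-pol} are available. The one point that deserves a line of care is the quantifier behind ``a finite set of values for $l$'': one must check that the threshold $l_0$ truly depends on nothing but the absolute constant $c_d$, so that the exceptional set is a single fixed finite set rather than something that could in principle grow. A secondary, cosmetic check is that the inequality $2^{(n_l-1)/2} \le n_l^{c_d}$ compares the \emph{same} quantity $n_l$ on both sides (it does, since both bounds are phrased in terms of the vertex count of $S_l$) and that $n_l \ge 3$, so the logarithmic manipulation is harmless.
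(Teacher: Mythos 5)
Your proposal is correct and follows essentially the same route as the paper's own proof: both combine the exponential lower bound of Theorem~\ref{sierpinski-theorem} (with $n = r_{\textrm{ind}}(S_l)$) against the polynomial upper bound of Corollary~\ref{corollary-pol} and conclude that $2^{(n_l-1)/2} \le n_l^{c_d}$ can hold only for finitely many $n_l$, with the exceptional set depending only on $c_d$. Your explicit logarithmic manipulation and the remark on the quantifier are just a slightly more detailed writing of the same comparison.
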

\begin{proof}
Let $G$ be an incompressible graph on $n$ vertices that contains $S_l$ as an induced subgraph.
Since $r_{\textrm{ind}} (S_l) \leq n_l^{c_d}$ from Corollary~\ref{corollary-pol}, it follows that $n \leq n_l^{c_d}$. Also, the bound $n \geq  2^{\frac{n_l-1}{2}}$ holds from
Theorem~\ref{sierpinski-theorem}. Thus, it follows that $n \leq n_l^{c_d}$ can only hold for a finite set of values for $n_l$, whose cardinality depends on the constant $c_d$, since the growth rate of $n$ is exponential in $n_l$ while the growth rate of the bound $n_l^{c_d}$ for $r_{\textrm{ind}} (S_l)$ is polynomial in $n_l$. \hfill $\square$
\end{proof}
Moreover, based on Lemma~\ref{lem.frac}, we obtain the following {\em stronger}, than Theorem~\ref{zig-zag-corol}, result:
\begin{theorem}
Almost all graphs on $r_{\textrm{ind}} (S_l)$ vertices (a fraction of $(1-\frac{1}{r_{\textrm{ind}} (S_l)})$ of them) are such that no two-colouring of their edges contains an induced monochromatic copy
of $S_l$, as $l$ grows.
\label{thm_fraction}
\end{theorem}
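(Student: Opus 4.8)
The plan is to combine two earlier facts: Theorem~\ref{zig-zag-corol}, which tells us that (for all but finitely many $l$) no incompressible graph on $r_{\textrm{ind}}(S_l)$ vertices contains $S_l$ as an induced subgraph, and Lemma~\ref{lem.frac}, which quantifies how prevalent ``sufficiently incompressible'' graphs are among all labelled graphs on a fixed number of vertices. The first fact is a statement about a single privileged graph; the second lets us upgrade it to a statement about \emph{almost all} graphs of that size. The bridge between the two is the observation that the argument proving Theorem~\ref{sierpinski-theorem} does not actually need full incompressibility: it only needs the randomness deficiency $\delta(n)$ to be small compared with the $\frac{n_l(n_l-1)}{2}$ bits saved by deleting the Sierpinski edges. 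So the real content is to re-run that counting argument with $\delta(n) = \log n$ in place of $\delta(n)=0$.

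First I would set $n = r_{\textrm{ind}}(S_l)$ and consider a $\log n$-random graph $G$ on $n$ vertices, i.e. one satisfying $C(E(G)\mid n) \geq \binom{n}{2} - \log n$. Suppose, for contradiction, that some two-colouring of $E(G)$ contains an induced monochromatic copy of $S_l$; in particular $G$ contains $S_l$ as an induced subgraph. Repeating the re-encoding of the proof of Theorem~\ref{sierpinski-theorem} — spend at most $n_l \log n$ bits to name the vertex subset and its Sierpinski ordering, then delete the $\frac{n_l(n_l-1)}{2}$ bits recording the induced edges among those vertices — gives an alternative description of $E(G)$ of length at most $\binom{n}{2} + n_l \log n - \frac{n_l(n_l-1)}{2} + O(\log n)$, where the $O(\log n)$ absorbs the description of $P$ and the parameters. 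For $\log n$-randomness to survive this compression we need $n_l \log n + O(\log n) \geq \frac{n_l(n_l-1)}{2} - \log n$, i.e. essentially $\log n \geq \frac{n_l - 1}{2} - O(1)$, hence $n \geq 2^{\,c\,n_l}$ for a suitable constant $c>0$ and all large $l$. On the other hand Corollary~\ref{corollary-pol} gives $n = r_{\textrm{ind}}(S_l) \leq n_l^{c_d}$, which is only polynomial in $n_l$. As in the proof of Theorem~\ref{zig-zag-corol}, the exponential lower bound and the polynomial upper bound are incompatible for all but finitely many $l$, so for all large $l$ \emph{no} $\log n$-random graph on $r_{\textrm{ind}}(S_l)$ vertices admits such a colouring. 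Finally, by Lemma~\ref{lem.frac} the $\log n$-random graphs make up a fraction at least $1 - \frac{1}{n} = 1 - \frac{1}{r_{\textrm{ind}}(S_l)}$ of all labelled graphs on $n$ vertices, which is exactly the claimed conclusion.

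The main obstacle is bookkeeping the additive $O(\log n)$ overhead carefully enough that it does not swamp the argument: with $\delta(n)=\log n$ one is comparing quantities of the same order ($\log n$ on one side, a constant times $\log n$ on the other through $n_l \log n$), so one must check that the \emph{dominant} $\frac{n_l(n_l-1)}{2}$ saving, which is quadratic in $n_l$ while $n_l \log n$ is only $n_l \cdot O(\log n_l)$ once $n \leq n_l^{c_d}$, still wins — and indeed it does, precisely because $r_{\textrm{ind}}(S_l)$ is polynomially bounded, so $\log n = O(\log n_l)$. That is the one place where Corollary~\ref{corollary-pol} is genuinely used inside the compression estimate, rather than only at the end, and it should be stated explicitly. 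Everything else is a routine transcription of the proofs of Theorem~\ref{sierpinski-theorem} and Theorem~\ref{zig-zag-corol} with the randomness deficiency turned on.
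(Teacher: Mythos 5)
Your proposal is correct and follows essentially the same route as the paper: re-run the compression argument of Theorem~\ref{sierpinski-theorem} with randomness deficiency $\delta(n)=\log n$, derive the inequality $n_l\log n + \log n \geq \frac{n_l(n_l-1)}{2}$ forcing $n$ to be exponential in $n_l$, contrast this with the polynomial bound of Corollary~\ref{corollary-pol} to rule out all but finitely many $l$, and invoke Lemma~\ref{lem.frac} to conclude that the affected graphs form a fraction at least $1-\frac{1}{r_{\textrm{ind}}(S_l)}$. Your explicit bookkeeping of the $O(\log n)$ overhead and the remark that $\log n = O(\log n_l)$ via the polynomial Ramsey bound is if anything slightly more careful than the paper's own write-up.
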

\begin{proof}
Following the same line of proof as in Theorem~\ref{sierpinski-theorem}, we now start with a labelled graph $G$ such that
\begin{equation}
C(E(G)|n,P) \geq \frac{n(n-1)}{2} - \log{n}.
\label{incomp3}
\end{equation}
These graphs form a fraction of at least $(1 - \frac{1}{n})$ of all labelled graphs on $n$ vertices, according to Lemma~\ref{def-lemma}.
For $n = r_{\textrm{ind}} (S_l)$ these graphs form a ratio of at least $(1 - \frac{1}{r_{\textrm{ind}} (S_l)})$ of all graphs with $r_{\textrm{ind}} (S_l)$ vertices. What is stated below, applies to {\em all} of these graphs
which, as $l$ tends to infinity, including {\em almost all} possible graphs on $r_{\textrm{ind}} (S_l)$ vertices.

The rest of the proof follows closely the proof of Theorem~\ref{sierpinski-theorem}, setting $n = r_{\textrm{ind}} (S_l)$, but now the following inequality must be satisfied instead of Inequality~(\ref{kolm-ineq}):
\begin{equation}
n_l \log n + \log n \geq  \frac{n_l(n_l-1)}{2} \Leftrightarrow n \geq 2^{\frac{n_l(n_l-1)}{2(n_l+1)}}
\Leftrightarrow r_{\textrm{ind}} (S_l) \geq 2^{\frac{n_l(n_l-1)}{2(n_l+1)}}.
\label{kolm-ineq2}
\end{equation}
However, since $r_{\textrm{ind}} (S_l) \leq n_l^{c_d}$ for some constant $c_d$ depending only on the maximum degree $d$ of the graph vertices,
according to Theorem~\ref{upper-induced}, Inequality~(\ref{kolm-ineq2}) would require $n_l^{c_d} \geq 2^{\frac{n_l(n_l-1)}{2(n_l+1)}}$, which cannot hold from some value of $l$ onwards. \hfill$\Box$
\end{proof}
%
%
%
\begin{definition} [Size Constructible Graphs]
Let ${\cal F}$ be a family of graphs. We call ${\cal F}$ size constructible if each of the graphs in ${\cal F}$ can be uniquely constructed by an algorithm $P_{\cal F}$ which takes as inputs the graph's size, i.e. the number of vertices of the graph, and, possibly, a permutation that gives some ordering information about the vertices.
\label{SCG}
\end{definition}
A direct consequence of Definition~\ref{SCG} is that the family of Sierpinski Graphs is size constructible. This is due to the fact that a Sierpinski Graph $S_l$ with $n_l$ vertices can be described with only information $n_l$ and the permutation that denotes the ordering of its vertices which the reconstruction algorithm $P$ that we described in the proof of Theorem~\ref{sierpinski-theorem} uses in order to reconstruct the graph.

Then, Theorem~\ref{sierpinski-theorem} can be generalized as follows:
\begin{theorem}
Let ${\cal F}$ be a family of size constructible graph. Let Let $G$ be a labelled incompressible graph on $n$ vertices that contains  as an induced subgraph a graph in ${\cal F}$ with $k$ vertices. Then $n \geq 2^{\frac{k-1}{2}}$ if $P_{\cal F}$ requires information about the ordering of the vertices and $n \geq k 2^{\frac{k}{2}} \left(\frac{1}{e \sqrt{2}} - o(1) \right)$ if $P_{\cal F}$ does not require this information.
\label{SCGsierpinski-theorem}
\end{theorem}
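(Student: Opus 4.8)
The plan is to mimic the counting argument of Theorem~\ref{sierpinski-theorem} almost verbatim, replacing the Sierpinski-specific reconstruction algorithm $P$ by the algorithm $P_{\cal F}$ supplied by size constructibility. Given an incompressible $G$ on $n$ vertices containing some $H \in {\cal F}$ with $k$ vertices as an induced subgraph, I would build an alternative encoding $E'(G)$ that (i) records which $k$-subset of $V(G)$ carries the copy of $H$, (ii) possibly records an ordering permutation of those $k$ vertices, and (iii) deletes the $\binom{k}{2}$ bits of $E(G)$ that encode the edges among those $k$ vertices. Since $H$ is size constructible, $P_{\cal F}$ can regenerate $H$ (hence those deleted edge-bits) from $k$ together with the ordering data, so $E(G)$ is recoverable from $E'(G)$ given $n$ and $P_{\cal F}$; incompressibility then forces $l(E'(G)) \ge l(E(G))$, i.e. the bits we added must be at least $\binom{k}{2} = \tfrac{k(k-1)}{2}$.

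The two cases differ only in the cost of item~(i)--(ii). When ordering information is required, the cost is $\log\binom{n}{k} + \log k! \le \log\frac{n^k}{k!} + \log k! = k\log n$, exactly as in~(\ref{upzigzag}); setting $k\log n \ge \tfrac{k(k-1)}{2}$ gives $n \ge 2^{(k-1)/2}$. When no ordering is needed, we only pay $\log\binom{n}{k}$ bits, and to get the sharper bound I would use the lower estimate $\binom{n}{k} \ge \bigl(\tfrac{n}{k}\bigr)^k$, or more precisely Stirling in the form $k! \le e\sqrt{k}\,(k/e)^k$ so that $\binom{n}{k} \ge \frac{n^k}{k!} \ge \frac{n^k}{e\sqrt{k}(k/e)^k}$; wait --- that is an upper bound on $1/k!$ giving a lower bound on $\binom{n}{k}$, which is the wrong direction for bounding $\log\binom{n}{k}$ from above. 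The correct move is: the number of added bits is at most $\log\binom{n}{k} \le \log\frac{n^k}{k!}$, and requiring $\log\frac{n^k}{k!} \ge \tfrac{k(k-1)}{2}$ yields $n^k \ge k!\,2^{k(k-1)/2}$, hence $n \ge (k!)^{1/k} 2^{(k-1)/2}$. Then I would plug in the standard asymptotic $(k!)^{1/k} = \frac{k}{e}\bigl(1 + o(1)\bigr)$ (from Stirling), obtaining $n \ge \frac{k}{e} 2^{(k-1)/2}(1-o(1)) = \frac{k}{e\sqrt{2}} 2^{k/2}(1 - o(1)) = k\,2^{k/2}\bigl(\tfrac{1}{e\sqrt 2} - o(1)\bigr)$, which is precisely the claimed expression.

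The routine steps --- the encoding bookkeeping, the deletion of $\binom{k}{2}$ edge-bits, the inequality $l(E'(G)) \ge l(E(G))$ --- are essentially copied from the proof of Theorem~\ref{sierpinski-theorem}, with $P$ replaced by $P_{\cal F}$ and $n_l$ by $k$. The one genuinely new ingredient is the asymptotic evaluation of $(k!)^{1/k}$ via Stirling's formula to turn the clean bound $n \ge (k!)^{1/k} 2^{(k-1)/2}$ into the stated form, and this is where I expect the only real care is needed --- in particular, being honest about the direction of the Stirling estimate (we need an \emph{upper} bound on $\log\binom{n}{k}$, equivalently a \emph{lower} bound on $k!$, namely $k! \ge \sqrt{2\pi k}\,(k/e)^k$) so that the resulting inequality on $n$ points the right way. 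No other obstacle arises: size constructibility is exactly the hypothesis that makes $P_{\cal F}$ available as the decompressor, and the maximum-degree or close-knit properties of the Sierpinski family play no role here.
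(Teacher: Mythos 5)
Your proposal is correct and follows essentially the same route as the paper: reuse the encoding/deletion argument of Theorem~\ref{sierpinski-theorem} with $P_{\cal F}$ as the decompressor, charge $k\log n$ bits when an ordering is needed and $\log\binom{n}{k}\le k\log n - \log k!$ bits when it is not, and then extract the stated bound via Stirling. Your derivation of the second case, $n \ge (k!)^{1/k}2^{(k-1)/2} = k\,2^{k/2}\left(\tfrac{1}{e\sqrt{2}} - o(1)\right)$, is in fact spelled out more carefully than in the paper (which leaves the Stirling step implicit), and your attention to the direction of the estimates is exactly right.
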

\begin{proof}
If the algorithm $P_{\cal F}$ needs, except from $k$, also the ordering of the vertices, then a graph in ${\cal F}$ can be described using $k \log n + \log k!$ bits, as in the case of Sierpinski Graphs in Theorem~\ref{sierpinski-theorem}, thus the proof of this theorem also applies to the case of ${\cal F}$.

If $P_{\cal F}$ does not need the ordering information, then instead of needing
the number of bits given in Equation~(\ref{upzigzag}) in the proof of
Theorem~\ref{sierpinski-theorem}, we need, at most
\begin{equation}
\log \frac{n}{k!}  = k \log n - \log k!.
\label{upzigzag2}
\end{equation}
bits since the ordering information, i.e. $\log k!$ bits, is not required.

Thus, the following inequality must hold for an incompressible graph $G$, as in
Inequality~(\ref{kolm-ineq}) for the Sierpinski Graphs in the proof of Theorem~\ref{sierpinski-theorem}):
\begin{equation}
k \log n - \log k! \geq  \frac{k(k-1)}{2}.
\label{kolm-ineq3}
\end{equation}
Using Stirling's approximation $k! \approx (\frac{k}{e})^k \sqrt{2\pi k}$ and
solving for $n$,
Inequality~(\ref{kolm-ineq3}) leads to
$$
n \geq k 2^{\frac{k}{2}} \left(\frac{1}{e \sqrt{2}} - o(1) \right)
$$
which is the required. \hfill $\square$
\end{proof}
In our context, i.e. the emergence of an organized, close-knit, community in the form of a
Sierpinski Graph, the results above have certain ineresting consequences.
First of all, Theorem~\ref{sierpinski-theorem} and Theorem~\ref{SCGsierpinski-theorem}
(the generalization of Theorem~\ref{sierpinski-theorem}) show that the existence or purposeful
formation of
orgagnized structures, such as the Sierpinski Graphs, in
{\em incompressible}, i.e. random-like, networks of interacting agents requires
the networks to be {\em exponentially large} with respect to the size of the organized structure.
Smaller interconnection networks, i.e. {\em polynomially large} in the size of Sierpinski Graph,
almost certainly do not contain such organized structures. 

\section{Induced Ramsey numbers for incompressible graphs}

In this section we investigate Ramsey Numbers in the context of Kolmogorov random graphs. More specifically, we prove bounds on the size of Kolmogorov Random graphs so as to contain a Sierpinski Graph (or other size constructible graph) as a subgraph.

\begin{definition} (Induced Ramsey numbers for incompressible graphs)
The {\it induced Ramsey number} for incompressible graphs, $r^{\textrm{{\tiny INC}},\delta(n)}_{\textrm{ind}} (H)$, is defined as the {\em minimum} integer
for which there is a Kolmogorov Random graph $G$ with randomness
deficiency at least $\delta(n)$ on $r^{\textrm{{\tiny INC}},\delta(n)}_{\textrm{ind}} (H)$ vertices such that every 2-colouring of $E(G)$ contains an induced monochromatic copy of $H$.
\label{KR_def-induced-ramsey}
\end{definition}

\begin{theorem}
Let $H$ be a size constructible graph on $k$ vertices and of maximum degree at most $d$, whose description needs vertex ordering information (a similar result holds for size constructible graphs that do not require such information).
Let $n_1 = 2^{\frac{k-1}{2}} - 1$ and $n_2 = r_{\textrm{ind}} (H)$.
Then for  $\delta(n) \geq n_1 n_2 + {n_2 \choose 2}$, it holds that (i) 
$r^{\textrm{{\tiny INC}},\delta(n)}_{\textrm{ind}}(H)$ exists, and (ii)
$r^{\textrm{{\tiny INC}},\delta(n)}_{\textrm{ind}}(H) < 2^{\frac{k-1}{2}} + r_{\textrm{ind}} (H)$.
\label{KR_theorem}
\end{theorem}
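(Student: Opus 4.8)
The plan is to prove both parts at once by explicitly constructing a witness graph $G$ on $2^{\frac{k-1}{2}}-1+r_{\textrm{ind}}(H)$ vertices that is simultaneously a Ramsey graph for $H$ in the induced sense and compressible by (essentially) $n_1 n_2+\binom{n_2}{2}$ bits, so that it is a graph of the type called for in Definition~\ref{KR_def-induced-ramsey} for every $\delta(n)$ satisfying the hypothesis. Since $H$ is size constructible with maximum degree at most $d$, its induced Ramsey number is finite; let $G_0$ be a graph on $n_2=r_{\textrm{ind}}(H)$ vertices witnessing Definition~\ref{def-induced-ramsey}, so that every $2$-colouring of $E(G_0)$ contains a monochromatic induced copy of $H$. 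Independently, let $G_1$ be an incompressible labelled graph on $n_1=2^{\frac{k-1}{2}}-1$ vertices, namely one with $C(E(G_1)\mid n_1)\ge\binom{n_1}{2}$; such a graph exists by counting, since there are $2^{\binom{n_1}{2}}$ strings of length $\binom{n_1}{2}$ but strictly fewer shorter programs. Form $G$ as the vertex-disjoint union of $G_0$ and $G_1$ with \emph{no} edges between the two blocks, giving the $G_0$-block $A$ the labels $1,\dots,n_2$ and the $G_1$-block $B$ the labels $n_2+1,\dots,N$, where $N:=n_1+n_2$. Then $N=2^{\frac{k-1}{2}}-1+r_{\textrm{ind}}(H)<2^{\frac{k-1}{2}}+r_{\textrm{ind}}(H)$, so once $G$ is shown to be of this type part~(ii) follows, and the set defining $r^{\textrm{{\tiny INC}},\delta(n)}_{\textrm{ind}}(H)$ being then nonempty, so does part~(i).

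I would next check the Ramsey property of $G$. Any $2$-colouring of $E(G)$ restricts to a $2$-colouring of $E(G[A])=E(G_0)$, which by the choice of $G_0$ yields a monochromatic induced copy of $H$ on some vertex set $A'\subseteq A$; since $A'\subseteq A$ and $G[A]=G_0$, the edges of $G$ inside $A'$ coincide with those of $G_0$ inside $A'$, so this copy is an induced subgraph of $G$ as well, and it is monochromatic. Hence every $2$-colouring of $E(G)$ contains an induced monochromatic copy of $H$. (As a remark, by Theorem~\ref{sierpinski-theorem} the incompressible graph $G_1$, having fewer than $2^{\frac{k-1}{2}}$ vertices, contains no induced copy of $H$ at all, so the copy is necessarily carried by the planted block $A$; this is also the reason for the particular value of $n_1$, the largest size at which a block can be taken incompressible while still being forced to be $H$-free.)

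The substance of the argument is estimating the randomness deficiency $\binom{N}{2}-C(E(G)\mid N)$ of $G$. With the chosen labelling, the last $\binom{n_1}{2}$ bits of the edge string $E(G)$ are exactly $E(G_1)$: the pairs with both endpoints in $B$ are precisely those whose smaller endpoint exceeds $n_2$, and these form the final block of the lexicographic edge ordering. Hence from $E(G)$ together with $N$ and the single number $n_1$ one recovers $E(G_1)$, so $C(E(G)\mid N)\ge C(E(G_1)\mid n_1)-O(\log N)\ge\binom{n_1}{2}-O(\log N)$ by incompressibility of $G_1$. Using the elementary identity $\binom{N}{2}-\binom{n_1}{2}=n_1 n_2+\binom{n_2}{2}$ (valid because $N=n_1+n_2$), the randomness deficiency of $G$ is at most $n_1 n_2+\binom{n_2}{2}+O(\log N)$, hence at most $\delta(N)$ by the hypothesis $\delta(n)\ge n_1 n_2+\binom{n_2}{2}$ (the $O(\log N)$ overhead being absorbed, as is customary in such arguments). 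This exhibits $G$ as the required graph, establishing (i) and (ii). A matching lower estimate holds as well: $E(G)$ is reconstructible from $N$, an $O(k\log k)$-bit description of the structured gadget $G_0$, and the string $E(G_1)$, so the deficiency is also at least $n_1 n_2+\binom{n_2}{2}-O(k\log k)$, showing the threshold $n_1 n_2+\binom{n_2}{2}$ is essentially tight, not merely sufficient. (For a size constructible $H$ that does not require vertex-ordering information, the same construction and argument go through with $n_1$ replaced by the bound from Theorem~\ref{SCGsierpinski-theorem}.)

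The main obstacle is the Kolmogorov bookkeeping in the previous paragraph: one must ensure that supplying the auxiliary data --- the integer $n_1$ (equivalently $n_2=r_{\textrm{ind}}(H)$), and, for the lower estimate, a short encoding of the gadget $G_0$ --- costs only lower-order, $O(\log N)$ (respectively $O(k\log k)$), bits, so that the deficiency of $G$ is pinned to $n_1 n_2+\binom{n_2}{2}$ up to corrections negligible against it; here $\log N=O(k)$ while $n_1 n_2+\binom{n_2}{2}\ge 2^{\frac{k-1}{2}}-1$, and the bounded-degree hypothesis on $H$ enters through $n_2=r_{\textrm{ind}}(H)\le k^{c_d}$ to keep $\log N$ and $k\log k$ small compared with $n_1$. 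The colouring argument and the existence claim are routine once the construction is fixed; in essence all the content lies in this deficiency estimate together with the simple observation that a monochromatic induced copy of $H$ sitting inside the planted block $A$ remains an induced copy in the whole of $G$.
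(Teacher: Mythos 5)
Your construction is exactly the paper's: a vertex-disjoint union of an incompressible graph $G_1$ on $n_1=2^{\frac{k-1}{2}}-1$ vertices (hence $H$-free by Theorem~\ref{SCGsierpinski-theorem}) and a Ramsey witness $G_2$ on $n_2=r_{\textrm{ind}}(H)$ vertices, and your colouring argument for existence coincides with the paper's. Where you genuinely diverge is in the key step, the bound on the randomness deficiency of $G$. The paper argues by contradiction: assuming $C(E(G))<\binom{n_1+n_2}{2}-\delta(n_1,n_2)\le\binom{n_1}{2}$, it describes an algorithm that recovers $G_1$ from the short description of $G$ by computing connected components and identifying which of them belong to $G_2$ via an exhaustive check of all edge $2$-colourings of each component (using the minimality of $r_{\textrm{ind}}(H)$ to argue every component of $G_2$ must carry the Ramsey property, and Theorem~\ref{sierpinski-theorem} to argue no component of $G_1$ can). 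You instead observe that, under the lexicographic edge ordering with the $G_1$-block labelled last, $E(G_1)$ is literally the final $\binom{n_1}{2}$ bits of $E(G)$, so $C(E(G)\mid N)\ge C(E(G_1)\mid n_1)-O(\log N)$ directly; this is simpler, avoids the component-identification machinery entirely (and in particular does not need $G_1$ or $G_2$ to be connected or disconnected in any particular way), and you additionally supply a matching lower estimate on the deficiency that the paper only remarks on informally after Corollary~\ref{cor_KR_theorem}. One caveat applies equally to both arguments: each carries an $O(\log N)$ (or program-size) additive overhead, so strictly speaking the deficiency of $G$ is only pinned to $n_1n_2+\binom{n_2}{2}$ up to such lower-order terms; you acknowledge this explicitly, whereas the paper silently absorbs it, so your write-up is no less rigorous than the original on this point.
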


\begin{proof}
Let $G_1$ be any {\em incompressible} graph on $n_1 = 2^{\frac{k-1}{2}} - 1$ vertices, so that (according to Theorem~\ref{SCGsierpinski-theorem}) it does not contain $H$ as an induced
subgraph. Since $G_1$ is incompressible, it holds $C(E(G_1)) \geq {n_1 \choose 2}$.

Let, also, $G_2$ be any graph on $n_2 = r_{\textrm{ind}} (H) \leq k^{c_d}$
vertices such that for every two-colouring of its edges it contains an induced
monochromatic copy of $H$ (the existence of such a graph is guaranteed by Theorem~\ref{upper-induced}).
%
%
%
%
%
%
Let $n=n_1 + n_2$. We focus on the graphs $G$ with $n$ vertices which,
simply, consist of one copy of $G_1$ and one copy of $G_2$, that is,
two subgraphs isomorphic to $G_1$ and $G_2$. There are no additional edges except those in these two subgraphs.

Obviously, every two-colouring of $E(G)$ contains an induced monochromatic copy of $H$, since $G_2$ does. This proves that 
$r^{\textrm{{\tiny INC}},\delta(n)}_{\textrm{ind}}(H)$ exists
for Kolmogorov random, i.e. incompressible, graphs (at least for the deficiency function $\delta(n)$ which will be defined below).

Also, $G$ has $n = n_1 + n_2 = 2^{\frac{k-1}{2}} - 1 + r_{\textrm{ind}} (H)$ vertices.
We will show that for appropriate randomness deficiency function $\delta(n) = \delta(n_1,n_2)$ the following holds, i.e. $G$ is $\delta(n_1,n_2)$-incompressible:
\begin{eqnarray}
& & C(E(G)) \geq {n_1 + n_2 \choose 2} - \delta(n_1,n_2) \nonumber \\
& &  \mbox{with }  \delta(n_1,n_2) \geq {n_2 \choose 2} +
n_1n_2.
\label{KRG}
\end{eqnarray}
Assume, towards a contradiction, that $C(E(G)) < {n_1 + n_2 \choose 2} - \delta(n_1,n_2)$
for a randomness deficiency function $\delta(n_1,n_2)$ that obeys the inequality in~(\ref{KRG}).
Then the following holds:
\begin{eqnarray}
C(E(G)) & < & {n_1 + n_2 \choose 2} - \delta(n_1,n_2) \leq  \nonumber \\
& & {n_1 + n_2 \choose 2} - \left [{n_2 \choose 2} +
n_1n_2 \right] = {n_1 \choose 2}.
\label{KRGbound}
\end{eqnarray}
We will describe an algorithm that can reconstruct $G_1$ from a description of
$G$ of $C(E(G)) < {n_1 \choose 2}$ bits contradicting
our assumption that $C(E(G_1)) \geq {n_1 \choose 2}$, i.e. that $G_1$ is incompressible.

Let us assume that we have a description of $G$ of $C(E(G)) < {n_1 \choose 2}$ bits.
Then we can reconstruct $G_1$ as follows. We first reconstruct $G$ using its description of $C(E(G))$ bits.
We, then, compute its {\em connected components} using, e.g.,
a {\em Depth First Search (DFS)} algorithm (see~\cite{algorithms}).
Let these components be $G_{i_1}, G_{i_2}, \ldots, G_{i_s}$.
One of these components must be the graph $G_2$. The union of the rest of the components should be the graph $G_1$. However, it is possible that only {\em one} component exists, beyond $G_2$, if $G_1$ is a {\em connected} graph.
Our next step is to identify $G_2$. If we succeed in this task, then
we have identified $G_1$: it is the graph that is composed of vertices and edges of $G$ which do not belong in $G_2$.

In order to identify $G_2$ we work on the components $G_{i_1}, G_{i_2}, \ldots, G_{i_s}$, one at a time.
We should be cautious here, since $G_2$ may not be a {\em connected} graph and, thus, we need to locate {\em all} its components.
The crucial observation is that $G_2$ cannot contain a component such that no two-colouring of its edges contains an induced monochromatic copy of $H$. Otherwise, we could dispense with this component and have a smaller graph satisfy the definition of $r_{\textrm{ind}} (H)$, contradicting its {\em minimality} requirement.

Let $G_{i_l}$ be the currently examined component. We produce all possible
edge 2-colourings of this component and for {\em each} of them we check
whether $G_2$ contains a monochromatic copy of $H$.
Observe that this can be true only for $G_2$ components
since all other components belong to $G_1$, which is an incompressible graph
with less than $2^{\frac{k-1}{2}}$ vertices and thus, according to
Theorem~\ref{sierpinski-theorem}, its edge 2-colourings (and, consequently,
its components' 2-colourings) cannot contain a monochromatic copy of $H$.
Thus, having identified $G_2$ we can reconstruct $G_1$ as the
subgraph of $G$ containing the rest of the components.
In this way, we have managed to reconstruct $G_1$ using less than ${n_1 \choose 2}$ bits, which
is a contradiction. Consequently,
$C(E(G)) \geq {n_1 + n_2 \choose 2} - \delta(n_1,n_2)$, that is $G$ is $\delta(n_1,n_2)$-incompressible.

In conclusion, $G$ is a graph of $n = n_1 + n_2$
which, for $\delta(n) \geq \delta(n_1,n_2)$, is $\delta(n)$-incompressible and all the 2-colourings of its edges contain an induced monochromatic copy of $H$. Thus, it follows
that $r^{\textrm{{\tiny INC}},\delta(n)}_{\textrm{ind}}(H)
\leq n = 2^{\frac{k-1}{2}} - 1 + r_{\textrm{ind}} (H)$ or, in
simpler form, $r^{\textrm{{\tiny INC}},\delta(n)}_{\textrm{ind}}(H) < 2^{\frac{k-1}{2}} + r_{\textrm{ind}} (H)$.
\hfill $\square$
\end{proof}
From Theorems~\ref{sierpinski-theorem} and~\ref{KR_theorem}, the following is derived:
\begin{corollary}
For every graph $H$ on $k$ vertices and maximum degree at most $d$, it holds that
$2^{\frac{k-1}{2}} \leq r^{\textrm{{\tiny INC}},\delta(n)}_{\textrm{ind}}(H) < 2^{\frac{k-1}{2}} + r_{\textrm{ind}} (H)$.
\label{cor_KR_theorem}
\end{corollary}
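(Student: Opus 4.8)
The plan is to prove the two inequalities separately, since each is essentially a transcription of one of the two cited results. First note that the lower bound needs nothing about $H$ beyond the fact that a fixed graph on $k$ vertices is trivially \emph{size constructible requiring vertex-ordering information} in the sense of Definition~\ref{SCG}: an algorithm given $k$ and the ordering $i_1 < \cdots < i_k$ of an occurrence can reproduce the (hard-coded) graph $H$ with that labelling, so Theorem~\ref{SCGsierpinski-theorem} applies to $\{H\}$ with $k$ in the role of $n_l$ (this is the same estimate as in Theorem~\ref{sierpinski-theorem}: naming which $k$-subset and which labelling costs at most $k\log n$ bits, against which one saves the ${k \choose 2}$ edge bits). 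The maximum-degree hypothesis $\Delta(H)\le d$ will be used only for the upper bound.

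For the upper bound I would simply invoke Theorem~\ref{KR_theorem}(ii). It applies here: since $\Delta(H)\le d$, Theorem~\ref{upper-induced} gives $r_{\textrm{ind}}(H)\le k^{c_d}<\infty$, hence $n_1=2^{\frac{k-1}{2}}-1$ and $n_2=r_{\textrm{ind}}(H)$ are finite and a deficiency function with $\delta(n)\ge n_1 n_2 + {n_2 \choose 2}$ exists; Theorem~\ref{KR_theorem}(ii) then states exactly $r^{\textrm{{\tiny INC}},\delta(n)}_{\textrm{ind}}(H) < 2^{\frac{k-1}{2}} + r_{\textrm{ind}}(H)$, with nothing further to do.

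For the lower bound, let $G$ be any graph realising $n:=r^{\textrm{{\tiny INC}},\delta(n)}_{\textrm{ind}}(H)$. By Definition~\ref{KR_def-induced-ramsey}, $G$ is a Kolmogorov random (incompressible) graph on $n$ vertices all of whose edge $2$-colourings contain an induced monochromatic copy of $H$; in particular, fixing one arbitrary colouring exhibits $H$ as an induced subgraph of $G$. Applying Theorem~\ref{SCGsierpinski-theorem} (equivalently Theorem~\ref{sierpinski-theorem} when $H=S_l$) to this incompressible $G$ containing $H$ as an induced subgraph forces $n\ge 2^{\frac{k-1}{2}}$. Combining with the previous paragraph yields the interval $2^{\frac{k-1}{2}} \le r^{\textrm{{\tiny INC}},\delta(n)}_{\textrm{ind}}(H) < 2^{\frac{k-1}{2}} + r_{\textrm{ind}}(H)$.

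The step I expect to be the main obstacle is the bookkeeping of the randomness-deficiency function across the two halves. Re-running the compression count behind Theorem~\ref{sierpinski-theorem}/\ref{SCGsierpinski-theorem} for a graph of deficiency $\delta(n)$ rather than a strictly incompressible one only gives $n\ge 2^{\frac{k-1}{2}-\frac{\delta(n)}{k}}$, so the clean bound $n\ge 2^{\frac{k-1}{2}}$ genuinely requires the witness to be incompressible (or of deficiency $o(k)$), whereas the existence half, Theorem~\ref{KR_theorem}(i), is stated for the much larger deficiency $\delta(n)\ge n_1 n_2 + {n_2 \choose 2}$. I would resolve this by reading the corollary as a two-sided summary — the lower bound proved for the near-incompressible regime (say $\delta(n)=0$, or $\delta(n)=O(\mathrm{poly}(k))$), where it is exact, and the upper bound for $\delta(n)\ge n_1 n_2 + {n_2 \choose 2}$ — or, if one insists on a single $\delta(n)$, by carrying the $-\delta(n)$ slack through the count and checking that for $n$ in the relevant range the rounded conclusion survives. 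Everything outside this point is a direct quotation of the two theorems.
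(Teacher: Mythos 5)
Your proposal matches the paper's derivation exactly: the paper gives no separate argument for Corollary~\ref{cor_KR_theorem}, simply citing Theorem~\ref{sierpinski-theorem} for the lower bound and Theorem~\ref{KR_theorem}(ii) for the upper bound, which is precisely your two-step decomposition. The deficiency-function tension you flag --- the lower bound requires the witness to be (near-)incompressible while the existence half is only established for $\delta(n)\ge n_1n_2+\binom{n_2}{2}$ --- is a genuine gap that the paper leaves unaddressed, so your explicit treatment of it is a strengthening rather than a deviation.
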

Note that, since $n = n_1 + n_2$, the canonical representation of $G$ has $|E(G)|={n_1 + n_2 \choose 2}$ bits while
we set $\delta(n) = \delta(n_1, n_2) = n_1 n_2 + {n_2 \choose 2}$.
Thus, considering $k$ as a varying parameter, $\delta(n)$
is in the order of $\sqrt{|E(G)|}$. We could not prove
Theorem~\ref{KR_theorem} for deficiency functions $\delta(n)$ smaller than this, e.g. $\delta(n) = \log{|E(G)|}$.
However, we believe that this is not possible since when
a graph contains some regularity, in our case the Sierpinski
Graph $S_l$ on $n_l$ vertices, its complexity drops since the regularity can be deployed in reducing the size of its description.

\section{Conclusions and directions for further research}

Society is a complex human creation composed of multifaceted autonomous,
interacting, agents and their interrelationships.
It is a common theme in numerous research works that out
of evolving (even randomly) interactions and relationships
important phenomena and patterns emerge in complex networks of agents.
Such results rely on well established mathematical and physical
theories about models of interacting agents,
such as Random Graph theory and Complex Network theory.

In this paper we considered two well established mathematical theoretical frameworks targeting the concept of complexity of finite objects, the {\em society and its interacting agents} in our case: {\em Kolmogorov Complexity}
and {\em Ramsey Theory}. Both of these theories, each from
another perspective, study the conditions upon which
certain substructures appear (or do not appear) in large, evolving,
structures such as societal networks of interacting agents,
also deriving estimates on how large the structures
should become in order to contain such regularities.
We applied elements of these theories in order to study the appearance of
regular structures or patterns, in evolving societies, that have certain desirable properties. One of these properties, which was among our targets, is {\em close-knittedness} or, in other words, the property that describes structures of communicating agents whose members interact closely with each other and defend, strongly, their group's coherence and views against the containing, larger, structure.

More specifically, with Theorem~\ref{zig-zag-corol} we proved that the Sierpinski Graph,
as a highly organized structure, cannot emerge in evolving graphs (societies) unless they reach a sufficient size, exponential in the size of the organized structure.
With Theorem~\ref{KR_theorem} and Corollary~\ref{cor_KR_theorem} we gave bounds on
the size the graphs (societies) that contains with certainty any organized
structure of bounded degree (i.e. relationships) among its vertices (i.e. society members).

We hope that our work will contribute to the further exploitation of the rich
mathematical theories of complex structures and their long-term evolutionary properties.
In this way, we feel we can strengthen the efforts towards the study of Social Sciences
with methodologies stemming from exact sciences and formal systems.

\end{document}